\documentclass[a4paper,twocolumn,11pt]{quantumarticle}
\makeatletter
\providecommand\@afterenddocumenthook{}
\makeatother
\pdfoutput=1
\usepackage[utf8]{inputenc}
\usepackage{csquotes}
\usepackage[english]{babel}
\usepackage[T1]{fontenc}
\usepackage{amsmath}
\usepackage{amssymb}
\usepackage{hyperref}
\usepackage{braket}
\usepackage{bm}
\usepackage{comment}
\usepackage{float}
\usepackage{amsthm}
\usepackage[numbers,sort&compress]{natbib}

\newtheorem{theorem}{Theorem}
\newtheorem{corollary}{Corollary}

\makeatletter
\newrobustcmd{\fixappendix}{%
  \patchcmd{\l@section}{1.5em}{7em}{}{}%
  \patchcmd{\l@subsection}{2.3em}{7em}{}{}%
}
\makeatother

\newcommand{\eg}{e.g$.$ }

\newcommand{\ie}{i.e$.$ }
\newcommand{\cf}{cf$.$ }

\newcommand{\wrt}{wrt$.$ }
\newcommand{\dd}{\ensuremath{\mathrm{d}}}

\newcommand{\secref}[1]{Section~\ref{sec:#1}}
\newcommand{\appref}[1]{Appendix~\ref{app:#1}}
\newcommand{\figref}[1]{Figure~\ref{fi:#1}}
\newcommand{\eqnref}[1]{Equation~\ref{eq:#1}}

\begin{document}

\title[Wave packets from the spectrum]{Wave packets from the spectrum}

\author[2,3]{ChunJun Cao}
\author[4,5]{Oliver Friedrich}
\email[corresponding, ]{oliver.friedrich@lmu.de}
\author[2,3]{Marin Girard}
\author[6]{Nicolas Loizeau}
\author[7]{Ashmeet Singh}

\affil[2]{Virginia Tech Center for Quantum Information Science and Engineering, Blacksburg, VA 24061, USA}

\affil[3]{Department of Physics, Virginia Tech, Blacksburg, VA 24061, USA}
\affil[4]{University Observatory, Faculty of Physics, Ludwig-Maximilians-Universität, Scheinerstr.\ 1, 81677 Munich, Germany\textbackslash EU}
\affil[5]{Excellence Cluster ORIGINS, Boltzmannstr.\ 2, 85748 Garching, Germany\textbackslash EU}
\affil[6]{Niels Bohr Institute, University of Copenhagen, Copenhagen, Denmark\textbackslash EU}
\affil[7]{Department of Physics, Indian Institute of Technology Delhi, Hauz Khas, New Delhi, 110016, India}

\begin{abstract}
    The freedom to change Fock basis seems to ensure a minimum amount of locality in lattice theories in the following sense: If $\lbrace (\hat a_i^\dagger\,,\,\hat a_i)\rbrace$ for $i=1,\dots,n$ is a lattice of creation and annihilation operators and if a given Hamiltonian $\hat H$ induces highly non-local dynamics on that lattice, then it will usually be possible to change to a new set of operators $\lbrace (\hat b_i^\dagger\,,\,\hat b_i)\rbrace$ in terms of which the dynamics appear less non-local. We demonstrate this by turning a highly non-local random matrix model  into a local, 1D lattice theory where particles can propagate in localized wave packets. More generally, we show that any Hamiltonian can be made to look like such a theory, with the lattice dispersion relation and the non-integrability of the theory depending on the spectrum of $\hat H$. We argue that our results are a step towards quantum mereology for fields.
\end{abstract}

\newpage 

\tableofcontents

\begin{figure*}
\centering  \includegraphics[width=\textwidth]{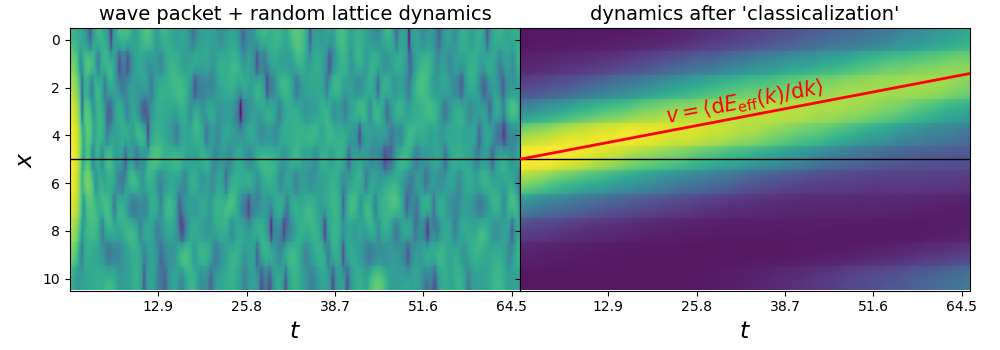}
   \caption{Demonstrating our procedure for (re-)interpreting a given Hamiltonian $\hat H$ as a lattice theory with local interaction structure. Left panel: evolution of an initially Gaussian wave packet \wrt random lattice dynamics, \ie when choosing the coefficients $h^\ell$ in \eqnref{decomp_intro} at random. Right panel: switching from the initial set of lattice operators $\hat a_i^\dagger\,,\,\hat a_i$ to a new set $\hat b_i^\dagger\,,\,\hat b_i$ via the algorithm developed in this paper. Wrt.\ the new lattice, states can stay localized for some time \& propagate according to an effective dispersion relation. We considered a Hilbert space of dimension $2^{11}$ (corresponding to 11 qubits) for this figure - see Sections~\ref{sec:intro} and \ref{sec:wave_packets} for details.}
   \label{fi:Rogerio_packet}
\end{figure*}

\section{Classicalizing quantum theories: quantum mereology}
\label{sec:intro}


Consider a theory of $n$ lattice sites with creation and annihilation operators at each site given by $\hat a_i^\dagger\,,\,\hat a_i\,$. For concreteness, we will consider a Fermion lattice such that these operators satisfy the anti-commutation relation
\begin{equation}
    \lbrace \hat a_i^\dagger\,,\,\hat a_j\rbrace = \delta_{ij}\ .
\end{equation}
At each lattice site we can define the Pauli-like operators
$\hat X_i \equiv\ (\hat a_i^\dagger + \hat a_i)$ , $\hat Y_i \equiv\ \mathrm{i}(\hat a_i^\dagger - \hat a_i)$ and $\hat Z_i \equiv 2 a_i^\dagger a_i - 1$,
and from these we can build (Fermionic)
\emph{Pauli strings} $\hat \tau$ as
\begin{align}
\hat \tau = \hat O_1\, \hat O_2\, \dots \, \hat O_n \ \ \mathrm{where}\ \  \hat O_i \in \lbrace \mathbb{I}\,,\,\hat X_i\,,\,\hat Y_i\,,\,\hat Z_i\rbrace\ .\nonumber \\
\end{align}
The set of all such strings forms a basis for operators on the lattice, and \eg any Hamilton operator $\hat H$ can be expressed as
\begin{equation}
\label{eq:decomp_intro}
    \hat H \simeq \sum_\ell h^\ell\,\hat\tau_\ell\ ,
\end{equation}
where $\hat\tau_\ell$ is the $\ell$th element in a list of all possible Pauli strings.

How should we choose the coefficients $h^\ell$ in order to create highly non-local dynamics on the above lattice? One attempt may be to randomly draw coefficients of similar amplitude for \emph{all} Pauli strings. Terms that couple arbitrarily many sites that are arbitrarily far from each other would then be present in \eqnref{decomp_intro}\ . But even if we draw such a Hamiltonian, there may still be a different set of lattice operators $\hat b_i^\dagger\,,\,\hat b_i$ \wrt which the dynamics of $\hat H$ take a less non-local form than \wrt the $\hat a_i^\dagger\,,\,\hat a_i\,$. So \wrt to this new lattice the dynamics of $\hat H$ would actually appear less extreme than what we had intended it to be.


We demonstrate such a situation in \figref{Rogerio_packet}. The left panel there shows the dynamics of an initially Gaussian wave packet when the coefficients $h^\ell$ are drawn from independent \& identical Gaussian distributions (+ an additional factor of $\mathrm{i}$ for Pauli strings with an even number of factors $\hat X_i$ or $\hat Y_i$ to ensure that $\hat H$ is Hermitian; \cf Appendix~\ref{app:GUE} for details). The right panel then switches from the defining lattice operators $\hat a_i^\dagger\,,\,\hat a_i$ to a new set of operators $\hat b_i^\dagger\,,\,\hat b_i$ - via an algorithm that we develop in this paper. The color map in both panels  indicates the value of $|\langle i|\psi\rangle|^2$, where $\ket{i}$ is given by either $\hat a_i^\dagger \ket{0}$ or by $\hat b_i^\dagger \ket{0}$ and where $\ket{\psi}$ is a Gaussian wave packet \wrt either the operators $\hat a_i^\dagger\,,\,\hat a_i$ or the operators $\hat b_i^\dagger\,,\,\hat b_i\,$. Note that the random dynamics do not preserve particle number, such that probability leaks from the one-particle wave function $\braket{i|\psi}$ to subspaces of higher particle number. For that reason, the color scale in the left panel of \figref{Rogerio_packet} was chosen to be logarithmic in $|\langle i|\psi\rangle|^2$. In \secref{wave_packets} we find that as a general property of our algorithm particle number is preserved \wrt the new lattice operators $\hat b_i^\dagger\,,\,\hat b_i\,$ (hence, a linear color scale was chosen in the right panel). It is also apparent from the right panel of \figref{Rogerio_packet} that \wrt the new lattice, initially localized states can stay localized for some time and propagate according to some effective dispersion relation - see again \secref{wave_packets} for details.

In general, one might expect that for any Hamiltonian $\hat H$ in a Hilbert space of dimension $d = 2^n$ there will be a lattice of operators $\hat b_i^\dagger\,,\,\hat b_i\,$ that makes the dynamics induced by $\hat H$ look most local. In this paper we explore how to find such operator sets. Our investigation builds on the point-of-view that the physics of our Universe is solely described by unitary dynamics of states in a Hilbert space $\mathcal{H}$ induced by a Hamiltonian $\hat H$. Any decomposition of $\mathcal{H}$ into tensor factors or lattice sites is not fundamental, but can be chosen to satisfy certain operational purposes.

This point-of-view puts our work within the program of \emph{quantum mereology} \cite{Carroll_Singh2020}, which intends to work out a map from the landscape of abstract quantum theories (i.e.\ from quantum states and/or Hamiltonian operators that are not explicitly derived by \enquote{quantizing} classical theories) to degrees-of-freedom with semi-classical behaviour. A high-level summary of (one version of) the mereology program could be given as follows: For a given Hamiltonian $\hat{H}$, split the Hilbert space $\mathcal{H}$ into tensor factors
\begin{equation}
    \mathcal{H} = \mathcal{H}_1 \otimes \mathcal{H}_2 \otimes \mathcal{H}_3 \otimes\ \dots
\end{equation}
such that on each factor $\mathcal{H}_i$ pairs of conjugate observables $\lbrace\hat{Q}_i,\hat{P}_i\rbrace$ can be defined whose dynamics -- as induced by the Hamiltonian $\hat{H}$ -- is close-to-classical. We would then think of these observables as measuring the different degrees-of-freedom (dofs) of the emergent, semi-classical Universe.



Many ways of decomposing an abstract quantum theory into degrees-of-freedom and quantifying the classicality of such decompositions have been explored. The authors of \cite{Carroll_Singh2020}, motivated by the quantum decoherence paradigm, split a Hilbert space into a single system and an environment, i.e.\ $\mathcal{H} = \mathcal{H}_{\mathrm{sys}}\otimes \mathcal{H}_{\mathrm{env}}\,$, and identified conjugate operator pairs on $\mathcal{H}_{\mathrm{sys}}$ such that two criteria were optimized for: entanglement growth between system and environment should be minimized and peaked states in the emergent conjugate operators (\enquote{wave packets}) should disperse slowly. The authors referred to these two conditions as \emph{robustness} and \emph{predictability} of the emergent classical factorization.

In \cite{Cotler2019} the authors investigated whether a given Hamiltonian $\hat{H}$ admits a factorization of $\mathcal{H}$ into a set of factors for which $\hat{H}$ encodes \emph{local} interactions. In particular, they were looking for $k$-local factorizations where the interaction terms appearing in $\hat{H}$ connect at most $k$ of the emergent Hilbert space factors. Their findings are twofold: if a $k$-local factorization for a given Hamiltonian exists (for $k \ll N_{\rm{dof}}$), then it is essentially unique \cite{soulas2025a}; but most Hamiltonians do not admit an exact $k$-local factorization. It was however demonstrated by \cite{Loizeau2023} that random Hamiltonians drawn from the Gaussian Unitary Ensemble are well approximated even by $2$-local Hamiltonians in some factorization. In particular, they showed that the error in the spectrum made by this approximation decreases exponentially with system size.

Of course, factorizing a Hilbert space such that the Hamiltonian's interaction terms are close to $2$-local on the graph of factors is not yet sufficient for representing genuinely local interactions of degrees-of-freedom that are arranged in some low-dimensional geometry. The authors of \cite{Cao2017, CaoCarroll2018} investigated how such a geometry can be assigned to the graph of Hilbert space factors, using quantum entanglement as a proxy for geometrical concepts such as \enquote{distance} and \enquote{area}. Since entanglement is a property of the state (and not of the Hamiltonian), this suggests that mereology should be informed by both $\hat H$ and the current state of a quantum system, in order to identify the Hilbert space factorization that most closely resembles a classical Universe. In \cite{Pollack_Singh2019}, the authors argued that the vast majority of finite-dimensional Hilbert spaces cannot be isomorphic to the tensor product of Hilbert space sub-factors that describes a lattice theory. They explored emergent direct-sum decompositions/superselection sectors between which scrambling of localized information is minimized. Each sector, with the appropriate dimensions, can then be explored for its suitability to serve as a lattice-like framework.

A generalization of the mereology framework was proposed by \cite{Zanardi2004, Zanardi2024,zanardi2025}, who suggested that degrees-of-freedom need not be represented in terms of exact Hilbert space factors. Instead, any von-Neumann algebra of operators $\mathcal{A}$ could represent the operational degrees-of-freedom of a (sub-)system of the Universe. By minimizing the \emph{Gaussian scrambling rate} between $\mathcal{A}$ and its commutant $\mathcal{A}'$, the authors of \cite{Zanardi2024} identified the most independent algebras of a given theory. This approach can still be thought of as splitting a quantum theory into a system and its environment, with the objective of minimizing the exchange of quantum information between the two. In the case where $\mathcal{A}$ is a factor algebra, \cite{Zanardi2024} also showed that minimization of the Gaussian scrambling time is equivalent to finding the system-environment split in which the interaction Hamiltonian of the two factors has minimized Hilbert-Schmidt norm. The authors of \cite{Kabernik2020} also studied generalizations of exact tensor products. In particular, they considered \emph{partial bipartitions}, structures that are weaker than algebraic factors, to encode information about coarse-grained and collective degrees-of-freedom which can be thought of as classical under Hamiltonian evolution. 

Instead of splitting a given Hilbert space $\mathcal{H}$ into a single system and an environment, \cite{Loizeau2024} proposed to iteratively decompose $\mathcal{H}$ into a more and more fine grained set of \emph{weakly interacting sectors}. They would e.g.\ first decompose a Hilbert space of dimension $2^N$ as $\mathcal{H} = \mathcal{H}_A \otimes \mathcal{H}_B$ where the two factors have dimension $2^{N/2}$ and are chosen such that in this factorization the Hamiltonian of a given theory is well approximated by $\hat H \approx \hat H_A \otimes \mathbb{I}_B + \mathbb{I}_A \otimes \hat H_B$\ . The resulting factors $\mathcal{H}_A$ and $\mathcal{H}_B$ can then be split further into an emergent hierarchy of weakly interacting sectors, and the spectrum of $\hat{H}$ determines how far this decomposition can go while still returning factors with weak interactions. The most fine grained factors that are still weakly interacting would then represent the elementary degrees-of-freedom of a given theory.

The authors of \cite{Adil2024} however point out that classicality of different sub-systems is often associated with the existence of long-lived pointer states, and not necessarily with weak interactions. They devise an algorithm that splits $\mathcal{H}$ into a system and an environment, such that a given Hamiltonian $\hat H$ permits the existence of such pointer states in the system. They consider different examples of random Hamiltonian operators and find that a given Hamiltonian often allows for a variety of different splits $\mathcal{H} = \mathcal{H}_{\mathrm{sys}} \otimes \mathcal{H}_{\mathrm{env}}$ such that $\mathcal{H}_{\mathrm{sys}}$ has stable pointer states. Interestingly, the same Hamiltonian can appear in very different shapes in the different emergent factorizations. Hence, \cite{Adil2024} call these factorizations the different \emph{realms} of emergent physics.

Finally, note that quantum mereology can also be interpreted as a procedure to extract symmetries from a generic quantum system.
Closed, strongly interacting many-body systems are expected to thermalize, in the sense captured by the Eigenstate Thermalization Hypothesis (ETH) \cite{Mori2018, Polkovnikov2011, Rigol2008}. But the semi-classical world of our experience does not show full eigenstate thermalization: empirically, not all local observables thermalize, and long-lived structures exist. This departure from ergodicity is commonly attributed to the presence of symmetries, both in the form of conserved quantities and dynamical symmetries \cite{buca2023,Sala2020, loizeau2025}.
Why is the world governed by 
such a non-generic Hamiltonian? Is its structure finely tuned to introduce symmetries, or can symmetries emerge within more generic systems - e.g.\ modeled by a random matrix Hamiltonian?
Random matrix theory successfully captures the spectral properties of strongly interacting quantum systems \cite{Wigner2023, Guhr1998, Atas2013, AlbrechtIglesias2008, AlbrechtIglesias2015}, but explaining the emergence of macroscopic, quasi-classical subsystems and associated symmetries requires going beyond spectral statistics to quantum mereology. 
The decomposition of a Hamiltonian into approximately non-interacting sectors, e.g.\ as $\hat H \approx \hat H_A \otimes \mathbb{I}_B + \mathbb{I}_A \otimes \hat H_B$ --- singles out conserved quantities: $\hat H_A$ commutes with $H$ and is therefore conserved under the dynamics. More generally, any observable supported on subsystem $A$ would fail to thermalize (quickly) with respect to the whole system, giving rise to a family of local conserved quantities and dynamical symmetries.
Other algebraic approaches to partitioning that take into account more general symmetries have been explored in \cite{vanrietvelde2025}.

A (rough) categorization of the different approaches to quantum mereology may be based on the different types of degrees-of-freedom they try to identify. We attempt to call two major categories as follows.

\begin{itemize}
    \item[] \textbf{A) Object mereology:}
    
    These approaches attempt to identify dofs associated with some semi-classical object, e.g.\ the position and momentum of a macroscopic body. The existence of such an object is usually associated with the existence of pointer states in the corresponding Hilbert space factors. Thus, object mereology will generally try to identify factors (or more generally, degrees-of-freedom) that permit such states (see e.g.\ \cite{Carroll_Singh2020, Adil2024}). 
    \item[]  \textbf{B) Field mereology:}

    Mereology may also attempt to identify meaningful dynamical building blocks of a theory which, as a whole, has a classical limit. In particular, one would not consider the conjugate Fourier mode pairs $\lbrace\hat\phi_{\bm{k}}\, ,\, \hat\pi_{\bm{k}}\rbrace$ in a quantum field theory as the degrees-of-freedom of any semi-classical object. But it would nevertheless be considered a successful implementation of mereology if we could identify Hilbert space factors and associated operator pairs that behave like field theory modes in an abstract quantum theory.
\end{itemize}
Our work here belongs to the 2nd category. In particular, we try to add to the mereology program by studying how an arbitrary Hamiltonian may be expressed in terms of low-interaction degrees-of-freedom and we argue that these may serve as the harmonic modes of a lattice theory.

In \secref{outlook} we describe an algorithm for iteratively decomposing any Hamiltonian of dimension $d=2^n$ into a set of $n$ qubits that can be interpreted as modes of such a primitive lattice theory. And we also present results regarding the structure of any given Hamiltonian \wrt minimum-interaction-factorizations. Section~\ref{sec:app_to_GUE} in particular demonstrates that we can turn a chaotic random matrix model into a theory of propagating particles, whose propagation speed is described by an effective dispersion relation - \cf Figure~\ref{fi:Wigner_packets} for an advance look at these results. In \secref{method} we shed light on some of the findings of \secref{outlook} by showing that GUE Hamiltonians admit a discrete set of qubits with particularly low interaction. \secref{discussion} concludes with a discussion. In particular, we describe there open tasks that need to be addressed on a path towards retrieving field theory via quantum mereology.

\section{From abstract \texorpdfstring{$\hat H$}{} to lattice theory with local interactions}
\label{sec:outlook}
\label{sec:wave_packets}

Given a quantum theory defined by some Hamiltonian $\hat H$ acting on some Hilbert space $\mathcal{H}$ we want to identify a lattice of operators such that $\hat H$ takes the form of a local lattice theory \wrt these operators. For simplicity, we will be looking for a lattice of spin-$\frac{1}{2}$ particles which we will also refer to as a lattice of (Fermionic) qubits. Let $\mathcal{H}$ e.g.\ be of dimension $d=2^n$. Such a space is isomorphic to the Fock space of $n$ such qubits, 
\begin{align}
\label{eq:isomorphism}
    &\mathcal{H} \simeq \nonumber \\
    &\mathrm{span}\left(\left.\ket{i_1\dots i_n}\equiv(\hat a_1^\dagger)^{i_1}\dots (\hat a_n^\dagger)^{i_n}\ket{0}\right|i_k\in\lbrace 0,1\rbrace\right)
\end{align}
where the operators $\hat a_i^\dagger$ satisfy the anti-com-mutation relations
\begin{equation}
    \lbrace \hat a_i, \hat a_j^\dagger\rbrace = \delta_{ij}
\end{equation}
and where $\ket{0}$ is a state that is annihilated by all $\hat a_i\,$. 

Once we have identified some basis of $\mathcal{H}$ with the Fock basis $\lbrace \ket{i_1\dots i_n} \rbrace$ we can define Pauli-like operators $\hat X_i \equiv\ (\hat a_i^\dagger + \hat a_i)$ , $\hat Y_i \equiv\ \mathrm{i}(\hat a_i^\dagger - \hat a_i)$ and $\hat Z_i \equiv 2 a_i^\dagger a_i - 1$ at each lattice site. From these we can build operators of the form
\begin{equation}
\hat O_1\, \hat O_2\, \dots \, \hat O_n \ \ \mathrm{where}\ \  \hat O_i \in \lbrace \mathbb{I}\,,\,\hat X_i\,,\,\hat Y_i\,,\,\hat Z_i\rbrace\ .
\end{equation}
In \secref{intro} we called these the (Fermionic) \emph{Pauli strings}. In the following we want to slightly modify this convention. For $j\neq i$ the different $\hat X_i, \hat Y_i$ and $\hat X_j, \hat Y_j$ are anti-commuting. This has the consequence that the above products are Hermitian if they contain an overall odd number of $\hat X_i$ and/or $\hat Y_i$, while they are anti-Hermitian if that number is even. In order to turn all Pauli strings into Hermitian operators we will thus multiply a factor $\mathrm{i}$ to those strings that contain an even number of $\hat X_i$ and/or $\hat Y_i\,$. With this convention, the Pauli strings become a complete basis for Hermitian operators on $\mathcal{H}$ (viewing the latter as a real vector space). Any Hamiltonian operator $\hat H$ can thus be expressed as
\begin{equation}
\label{eq:expansion}
    \hat H \simeq \sum_\ell h^\ell\,\hat\tau_\ell\ ,
\end{equation}
where $\hat\tau_\ell$ is the $\ell$th element in a list of all Pauli strings and $h^\ell$ are real coefficients. 

Any choice of Fock basis or, equivalently, any choice of lattice operators $\hat a_i\,,\,\hat a_i^\dagger$ already provides an interpretation of $\hat H$ as a "lattice theory" via \eqnref{expansion}. But by choosing the operators $\hat a_i\,,\,\hat a_i^\dagger$ in a clever way one may hope that the expansion in \eqnref{expansion} can be brought closer to a local interaction structure.

Our starting point for finding such a set of operators is the realization that integrable lattice theories can always be decomposed into a set of modes that have vanishing interaction between each other - these factors being the "Fourier modes" of the lattice field. So we will try to choose the lattice operators $\hat a_i\,,\,\hat a_i^\dagger$ such that the interaction between the lattice sites is minimized. These operators can serve as the Fourier modes of an emergent lattice field (whose dynamics is approximately integrable).

In \secref{structure_of_H} we introduce a procedure for splitting off individual low-interaction qubits from a given $\mathcal{H}$ and $\hat H$. In \secref{iterative_algorithm} we show how to iteratively apply this procedure to obtain a lattice-like theory from any Hamiltonian. We then apply this iterative scheme to the Hamiltonian of a Pauli field (\secref{app_to_Pauli}) and to a Hamiltonian drawn from the Gaussian Unitary Ensemble (GUE, \secref{app_to_GUE}). Surprisingly, our approach turns the GUE Hamiltonian into a theory that allows for wave packets that propagate and stay local on the emergent lattice. 

\subsection{Candidate dofs from minimum interaction factorizations}
\label{sec:structure_of_H}

Let us start by considering a Hilbert space of even dimension $d$ that factorizes as $\mathcal{H} = \mathcal{H}_q \otimes \mathcal{H}_r$, where $\mathcal{H}_q$ is a two-dimensional Hilbert space (that of a qubit). Given such a factorization a Hamiltonian $\hat H$ can always be written as
\begin{align}
    \hat H\, &\simeq\,  \hat H_{q} \otimes \mathbb{I}_r\, +\, \mathbb{I}_q \otimes \hat H_{r}\, +\, \frac{\mathrm{Tr}(\hat H)}{d}\cdot \mathbb{I}\, +\, \hat H_{\mathrm{int}}\ ,
\end{align}
where the self-Hamiltonians $\hat H_{q}$ and $\hat H_{r}$ are traceless and where already the partial traces of the interaction $\hat H_{\mathrm{int}}$ vanish, i.e.\ $\mathrm{Tr}_q(\hat H_{\mathrm{int}}) = 0$ and $\mathrm{Tr}_r(\hat H_{\mathrm{int}}) = 0\,$. Because the trace of $\hat H_{q}$ vanishes, we can consider it as a multiple of a Pauli matrix. Let us e.g.\ always choose a basis in $\mathcal{H}_q$ such that this Pauli matrix is $\sigma_z$. Then
\begin{equation}
    \hat H_{q}\, =\, \begin{pmatrix}
    E_q/2 & 0 \cr 0 & -E_q/2
    \end{pmatrix}
    \, \equiv\, \frac{E_q}{2}\, \sigma_z\ ,
\end{equation}
where $E_q$ is the (self-)energy required to excite the qubit in $\mathcal{H}_q$ from one level to the other. Also, since we can always subtract the global trace from $\hat H$ without changing the dynamics it induces, we will from now on assume that $\mathrm{Tr} \hat H = 0\,$.  The Hamiltonian then becomes
\begin{align}
\label{eq:Hamiltonian_with_Sig_z}
    \hat H\, &\simeq\, \frac{E_q}{2}\, \sigma_z \otimes \mathbb{I}_r\, +\, \mathbb{I}_q \otimes \hat H_{r}\, +\, \hat H_{\mathrm{int}}\ .
\end{align}
For $d\geq 4$ there are infinitely many ways to factorize $\mathcal{H}$ as $\simeq \mathcal{H}_q \otimes \mathcal{H}_r\,$, and the different terms in the decomposition of the Hamiltonian in \eqnref{Hamiltonian_with_Sig_z} will in general differ between any two factorizations. In particular, the size of the interaction $\hat H_{\mathrm{int}}$ between the qubit $\mathcal{H}_q$ and the rest of the Hilbert space, $\mathcal{H}_r$, will be different for different factorizations. Here we are interested in factorizations where the interaction Hamiltonian is minimized, because operators that are supported only in $\mathcal{H}_q$ may then be interpreted as \enquote{harmonic modes} of an emergent, approximately integrable lattice theory. 

For concreteness, let us minimize the Hilbert-Schmidt norm of the interaction Hamiltonian, \ie we minimize the loss function
\begin{equation}
    \label{eq:loss_function}
    \mathcal{L}(\mathcal{H} \simeq \mathcal{H}_q \otimes \mathcal{H}_r|\hat H) \equiv \mathrm{Tr}(\hat H_{\mathrm{int}}^2)
\end{equation}
in the space of possible factorizations $\mathcal{H} \simeq \mathcal{H}_q \otimes \mathcal{H}_r\,$. In Appendix~\ref{app:structure_of_Hint} we demonstrate the following result.

\begin{theorem}
\label{th:H_structure}
    Wrt.\ factorizations $\mathcal{H} \simeq \mathcal{H}_q \otimes \mathcal{H}_r$ that are local minima of the loss function $\mathrm{Tr}(\hat H_{\mathrm{int}}^2)$ (in the space of possible factorizations) the (traceless) Hamiltonian $\hat H$ can always be expressed as
    \begin{align}
        &\hat H\, \simeq\, \frac{E_q}{2}\,\sigma_z \otimes \mathbb{I}_r\, +\, \mathbb{I}_q \otimes \hat H_{r}\, +\, \sigma_z \otimes \hat X_{\mathrm{int}}
    \end{align}
    i.e.\ the interaction Hamiltonian is always of the form $\hat H_{\mathrm{int}} \simeq \sigma_z \otimes \hat X_{\mathrm{int}}\,$. Furthermore the operator $\hat X_{\mathrm{int}}$ commutes with $\hat H_{r}\,$:
    \begin{equation}
        [\hat H_{r}, \hat X_{\mathrm{int}}] = 0\ .
    \end{equation}
\end{theorem}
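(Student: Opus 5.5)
The plan is to parametrize the space of factorizations by unitaries, $\hat H \mapsto \hat U\hat H\hat U^\dagger$ with the factorization held fixed, and to read off the first- and second-order conditions for a local minimum of $\mathrm{Tr}(\hat H_{\mathrm{int}}^2)$.

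\textbf{Step 1: rewrite the loss.} The decomposition $\hat H = \hat H_q\otimes\mathbb{I}_r + \mathbb{I}_q\otimes \hat H_r + \hat H_{\mathrm{int}}$ is orthogonal in the Hilbert--Schmidt inner product. This follows from tracelessness of $\hat H_q,\hat H_r$ and vanishing partial traces of $\hat H_{\mathrm{int}}$. Let $\mathcal{P}$ denote the (self-adjoint) orthogonal projector onto the local part $\hat L\equiv \hat H_q\otimes\mathbb{I}_r + \mathbb{I}_q\otimes\hat H_r$. Then
\begin{equation}
\mathrm{Tr}(\hat H_{\mathrm{int}}^2) = \mathrm{Tr}(\hat H^2) - \mathrm{Tr}(\hat L^2).
\end{equation}
Since $\mathrm{Tr}(\hat H^2)$ is unitarily invariant, minimizing the loss is the same as maximizing $\mathrm{Tr}(\mathcal{P}(\hat U\hat H\hat U^\dagger)^2)$.

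\textbf{Step 2: first-order condition.} Take $\hat U = e^{\mathrm{i}\epsilon \hat G}$ with arbitrary Hermitian $\hat G$. The first variation is
\begin{equation}
2\,\mathrm{Tr}\big(\hat L\,\mathrm{i}[\hat G,\hat H]\big) = 2\mathrm{i}\,\mathrm{Tr}\big(\hat G\,[\hat H,\hat L]\big).
\end{equation}
Stationarity for all $\hat G$ gives $[\hat H,\hat L]=0$, and hence $[\hat H_{\mathrm{int}},\hat L]=0$. Next I expand $\hat H_{\mathrm{int}}=\sum_{a\in\{x,y,z\}}\sigma_a\otimes\hat X_a$ with $\hat X_a$ traceless. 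I compute the commutator with $\frac{E_q}{2}\sigma_z\otimes\mathbb{I}_r+\mathbb{I}_q\otimes \hat H_r$ and separate the components along $\sigma_x,\sigma_y,\sigma_z$. This yields
\begin{equation}
[\hat X_z,\hat H_r]=0,\qquad [\hat A,\hat H_r] = -E_q\,\hat A \quad\text{with}\quad \hat A\equiv \hat X_x+\mathrm{i}\hat X_y .
\end{equation}
Setting $\hat X_{\mathrm{int}}\equiv\hat X_z$, the second claim of the theorem is already established. What remains is to show $\hat A=0$.

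\textbf{Step 3: second order kills $\hat A$ (main obstacle).} The relation $[\hat A,\hat H_r]=-E_q\hat A$ says that $\hat A$ is a ladder operator shifting $\hat H_r$-eigenvalues by $E_q$. Such a term is resonant with the qubit splitting and is not excluded by stationarity alone. To remove it I will use that the critical point is a local minimum, and evaluate the second variation along generators
\begin{equation}
\hat G=\sigma_+\otimes\hat B+\sigma_-\otimes\hat B^\dagger,
\end{equation}
first trying $\hat B\propto\hat A^\dagger$ and $\hat B\propto \mathrm{i}\hat A^\dagger$. These exchange excitation between the qubit and the rest and transport the resonant interaction into the self-Hamiltonians. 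The aim is to show that the Hessian $-\mathrm{Tr}([\hat G,[\hat G,\hat H]]\,\hat L)-\mathrm{Tr}(\mathcal{P}(\mathrm{i}[\hat G,\hat H])^2)$, taken with the correct sign for the loss, is strictly negative for this choice whenever $\hat A\neq0$. That contradicts local minimality.

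I expect this sign computation to be the hardest step. The terms quadratic in $\hat A$ must dominate, and one has to use the commutation relations from Step 2 to cancel the cross terms with $\hat H_r$ and $\hat X_z$.

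\textbf{Step 4: the degenerate case.} The case $E_q=0$ needs separate treatment. There all $\hat X_a$ commute with $\hat H_r$, and the qubit basis is not fixed by $\hat H_q$. I would try to use this residual freedom together with the same second-order argument to rotate the interaction into a single $\sigma_z$ component. If that fails, I would argue genericity, since $E_q\neq 0$ holds at generic minima.
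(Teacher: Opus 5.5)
Your route is correct, and it differs from the paper's in an essential way.

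\textbf{What the paper does.} The proof in Appendix~\ref{app:structure_of_Hint} stays at first order. It uses the Schrieffer--Wolff-type generator $S_{ij}=H_{\mathrm{int},ij}/(d_i-d_j)$ and shows that the loss has slope $-2\,\mathrm{Tr}(\hat V_{\mathrm{int}}^2)$. So at a minimum $\hat H_{\mathrm{int}}$ has no matrix elements between distinct eigenvalues of $\hat H_0$ (your $\hat L$), i.e.\ $[\hat H_{\mathrm{int}},\hat H_0]=0$. This is exactly your Step~2 condition. The paper then reads off $\hat H_{\mathrm{int}}=\sigma_z\otimes\hat X_{\mathrm{int}}$, implicitly assuming that no gap of $\hat H_r$ equals $E_q$.

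\textbf{Why second order is needed.} You correctly spot that this is where first order runs out. The resonant terms $\sigma_+\otimes\hat A^\dagger+\sigma_-\otimes\hat A$ commute with $\hat H_0$ and survive stationarity. Such critical points really occur. For $d=4$, $\frac{E}{2}(\sigma_z\otimes\mathbb{I}+\mathbb{I}\otimes\sigma_z)+g(\sigma_+\otimes\sigma_-+\sigma_-\otimes\sigma_+)$ is stationary with loss $2g^2$. Yet its spectrum $\{\pm E,\pm g\}$ admits a zero-interaction factorization.

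\textbf{Comparison.} The paper's argument buys brevity and an explicit descent direction at every non-stationary factorization. Yours buys a complete proof, because it uses genuine local minimality rather than mere stationarity.

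\textbf{Closing Step~3.} This step closes more easily than you expect.
\begin{itemize}
\item Take $\hat G=\mathrm{i}\sigma_+\otimes\hat A^\dagger-\mathrm{i}\sigma_-\otimes\hat A$. The real choice $\hat B\propto\hat A^\dagger$ fails in general: for $\hat X_z=0$ it generates a symmetry of $\hat H$.
\item Your relation $[\hat A,\hat H_r]=-E_q\hat A$ gives $[\hat G,\hat L]=0$. Hence the double-commutator term $\mathrm{Tr}(\hat L[\hat G,[\hat G,\hat H]])=\mathrm{Tr}([\hat L,\hat G][\hat G,\hat H])$ drops out. Its sign relative to the other term in your Hessian is wrong, but that is moot here.
\item The loss therefore changes by $-\epsilon^2\,\mathrm{Tr}\big(\mathcal{P}(\mathrm{i}[\hat G,\hat H])^2\big)+O(\epsilon^3)$.
\item No cross terms with $\hat H_r$ or $\hat X_z$ need controlling. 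Since $\sigma_z\otimes\mathbb{I}$ lies in the range of $\mathcal{P}$, one has $\mathrm{Tr}\big(\mathcal{P}(\mathrm{i}[\hat G,\hat H])\,(\sigma_z\otimes\mathbb{I})\big)=\mathrm{i}\,\mathrm{Tr}\big(\hat G\,[\hat H,\sigma_z\otimes\mathbb{I}]\big)=-4\,\mathrm{Tr}(\hat A^\dagger\hat A)$. This is nonzero whenever $\hat A\neq0$, so the second variation is strictly negative.
\end{itemize}

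\textbf{Step~4 is automatic.} The same computation goes through verbatim at $E_q=0$, where $[\hat A,\hat H_r]=0$. So Step~4 needs no genericity assumption. Moreover, since the qubit basis is then arbitrary, the argument even forces $\hat H_{\mathrm{int}}=0$ at such minima.
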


\begin{corollary}
\label{corr:H_structure}
    For factorizations $\mathcal{H} \simeq \mathcal{H}_q \otimes \mathcal{H}_r$ that are local minima of the loss function $\mathrm{Tr}(\hat H_{\mathrm{int}}^2)$ the auto-Hamiltonians $\hat H_{q} \otimes \mathbb{I}_r$ and $\mathbb{I}_q \otimes \hat H_{r}$ and the interaction Hamiltonian $\hat H_{\mathrm{int}}$ can always be simultaneously diagonalized.
\end{corollary}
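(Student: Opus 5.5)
The plan is to read Corollary~\ref{corr:H_structure} off Theorem~\ref{th:H_structure} using the standard fact that a family of pairwise commuting Hermitian operators on a finite-dimensional space has a common orthonormal eigenbasis. At a local minimum of $\mathrm{Tr}(\hat H_{\mathrm{int}}^2)$, Theorem~\ref{th:H_structure} gives a basis of $\mathcal{H}_q$ and a factorization in which
\begin{align}
\hat H_q\otimes\mathbb{I}_r = \frac{E_q}{2}\sigma_z\otimes\mathbb{I}_r , \qquad \hat H_{\mathrm{int}} = \sigma_z\otimes \hat X_{\mathrm{int}} , \qquad [\hat H_r,\hat X_{\mathrm{int}}]=0 .
\end{align}

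The first step is to check that all three operators are Hermitian. For $\hat H_q$, $\hat H_r$ and $\hat H_{\mathrm{int}}$ this holds by construction. We also need $\hat X_{\mathrm{int}}$ itself to be Hermitian. This follows from
\begin{equation}
\hat X_{\mathrm{int}} = \tfrac12 \mathrm{Tr}_q\big((\sigma_z\otimes\mathbb{I}_r)\hat H_{\mathrm{int}}\big),
\end{equation}
which is the partial trace over $\mathcal{H}_q$ of a Hermitian operator, since $\sigma_z\otimes\mathbb{I}_r$ commutes with $\sigma_z\otimes\hat X_{\mathrm{int}}$. The identity itself follows from $\sigma_z^2=\mathbb{I}_q$ and $\mathrm{Tr}\,\mathbb{I}_q=2$.

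The second step is to verify the three pairwise commutators:
\begin{itemize}
\item $[\sigma_z\otimes\mathbb{I}_r,\ \mathbb{I}_q\otimes\hat H_r]=0$ trivially.
\item $[\sigma_z\otimes\mathbb{I}_r,\ \sigma_z\otimes\hat X_{\mathrm{int}}]=\sigma_z^2\otimes[\mathbb{I}_r,\hat X_{\mathrm{int}}]=0$.
\item $[\mathbb{I}_q\otimes\hat H_r,\ \sigma_z\otimes\hat X_{\mathrm{int}}]=\sigma_z\otimes[\hat H_r,\hat X_{\mathrm{int}}]=0$, by Theorem~\ref{th:H_structure}.
\end{itemize}

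The third step is the explicit construction. Since $\hat H_r$ and $\hat X_{\mathrm{int}}$ are commuting Hermitian operators on $\mathcal{H}_r$, they have a common orthonormal eigenbasis $\{\ket{e_k}\}$. Let $\ket{\uparrow},\ket{\downarrow}$ be the $\sigma_z$ eigenbasis of $\mathcal{H}_q$. Then the product basis $\{\ket{s}\otimes\ket{e_k}\}$ diagonalizes all three operators at once. The eigenvalue of $\sigma_z\otimes\hat X_{\mathrm{int}}$ on $\ket{s}\otimes\ket{e_k}$ is $\pm x_k$, where $x_k$ is the eigenvalue of $\hat X_{\mathrm{int}}$ on $\ket{e_k}$. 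This construction shows in addition that the common eigenbasis can be chosen to be a product basis adapted to the factorization.

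No step is a real obstacle: all the content sits in Theorem~\ref{th:H_structure}. The only points needing care are the Hermiticity of $\hat X_{\mathrm{int}}$ and degenerate cases, such as $E_q=0$ or degenerate spectra of $\hat H_r$. Neither affects the argument, since simultaneous diagonalizability of commuting normal operators holds regardless of degeneracies.
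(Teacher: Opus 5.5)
Your argument is correct and is essentially the paper's: the corollary is read off from the structure $\hat H_{\mathrm{int}}=\sigma_z\otimes\hat X_{\mathrm{int}}$ with $[\hat H_r,\hat X_{\mathrm{int}}]=0$ established in Appendix~\ref{app:structure_of_Hint}, and your product eigenbasis $\{\ket{s}\otimes\ket{e_k}\}$ spells out the commuting-operators step that the paper leaves implicit. One caveat: the appendix's Schrieffer--Wolff descent argument by itself only yields $[\hat H_0,\hat H_{\mathrm{int}}]=0$, so the separate diagonalizability of $\hat H_q\otimes\mathbb{I}_r$ and $\mathbb{I}_q\otimes\hat H_r$ rests entirely on the stronger form stated in Theorem~\ref{th:H_structure}, which is a genuine extra input when $E_q$ coincides with a level spacing of $\hat H_r$.
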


\noindent We will consider the qubits $\mathcal{H}_q$ of factorizations $\mathcal{H} \simeq \mathcal{H}_q \otimes \mathcal{H}_r\,$ that are local minima of $\mathcal{L}$ to represent \emph{candidate degrees-of-freedom}. This is motivated by the fact that any small rotation away from such a factorization can only increase the interaction between the factors.

\subsection{Iteratively retrieving a Fourier space lattice}
\label{sec:iterative_algorithm}

We now want to move to a complete decomposition of a Hilbert space $\mathcal{H}$ of dimension $d=2^n$ into a lattice of $n$ qubits. We will start by looking for a Bosonic lattice, \ie by decomposing $\mathcal{H}$ into a tensor product of $n$ qubit Hilbert spaces $\mathcal{H}_{q,i}\,$,
\begin{equation}
\label{eq:isomorphism_2}
    \mathcal{H} \simeq \bigotimes_{i=1}^n \mathcal{H}_{q,i}\ .
\end{equation}
Given a Hamiltonian operator $\hat H$ we will attempt to choose this factorization such that interaction between the different factors is small. Certain operators on these factors can then - after a further Boson-to-Fermion mapping - serve as Fourier modes of an approximately integrable lattice theory.

To find such a factorization we will iteratively factor out individual low-interaction qubits in the way we had described in the previous subsection. More concretely, we proceed as

\begin{itemize}
    \item[Step 1:] $\mathcal{H} \simeq \mathcal{H}_{q,1}\otimes \mathcal{H}_{r,1}$ such that interaction between $\mathcal{H}_{q,1}$ and $\mathcal{H}_{r,1}$ is minimized;
    \item[Step 2:]  $\mathcal{H}_{r,1} \simeq \mathcal{H}_{q,2}\otimes \mathcal{H}_{r,2}$ such that interaction between $\mathcal{H}_{q,2}$ and $\mathcal{H}_{r,2}$ is minimized;

    \dots
    \item[Step $n-1$:]  $\mathcal{H}_{r,n-2} \simeq \mathcal{H}_{q,n-1}\otimes \mathcal{H}_{q,n}$ such that interaction between $\mathcal{H}_{q,n-1}$ and $\mathcal{H}_{q,n}$ is minimized.
\end{itemize}

\noindent This procedure is ambiguous, because at each iteration there will be many local minima of $\mathrm{Tr}(\hat H_{\mathrm{int}}^2)\,$, \ie many candidate degrees-of-freedom to choose from. For definiteness, let us assume that at each iteration we choose the minimum for which the ratio $\mathrm{Tr}(\hat H_{\mathrm{int}}^2) \big/ \mathrm{Tr}(\hat H_{\mathrm{q}}^2)$ is smallest. This ensures that we pick degrees-of-freedom, which are robust (\textit{i.e.} have high self-energies).

Defining
\begin{equation}
\label{eq:def_of_Zi_2nd}
    \hat Z_i = \underbrace{\mathbb{I} \otimes \cdots \otimes \mathbb{I}}_{i-1}\;\otimes\;\sigma_z\;\otimes\;\underbrace{\mathbb{I} \otimes \cdots \otimes \mathbb{I}}_{n-i}\ ,
\end{equation}
and remembering our conventions around \eqnref{Hamiltonian_with_Sig_z}, the self-Hamiltonian part of the total Hamiltonian $\hat H$ will in this iterative factorization be given by
\begin{equation}
\label{eq:H_self}
    \hat H_{\mathrm{self}} \simeq \sum_{i=1}^n \frac{E_{i}}{2}\, \hat Z_i\ .
\end{equation}
This is indeed reminiscent of the Hamiltonian of integrable lattice models, and the self-energies $E_i$ would be the energies associated with the Fourier modes of such a lattice. At the same time Theorem~\ref{th:H_structure} and Corollary~\ref{corr:H_structure} apply at each iteration of our factorization procedure, thus ensuring that the total interaction $\hat H_{\mathrm{int}}$ commutes (and is simultaneously diagonalisable) with $\hat H_{\mathrm{self}}\,$. This leads to the following consequence.

\begin{corollary}
\label{corr:Hint_from_Zs}
    Consider a Hamiltonian $\hat H$ on a Hilbert space $\mathcal{H}$ of dimension $d=2^n$, and iteratively split $\mathcal{H}$ into $n$ minimum-interaction qubits such that 
    \begin{equation}
    \label{eq:from_H_to_field_2}
        \hat H \simeq \sum_{i=1}^n \frac{E_{q,i}}{2}\, \hat Z_i \ +\ \hat H_{\mathrm{int}}\ .
    \end{equation}
    Then the total interaction $\hat H_{\mathrm{int}}$ is a superposition of Pauli strings that contain exclusively either $\mathbb{I}_{q}$ or $\sigma_{z}$ on each qubit factor.
\end{corollary}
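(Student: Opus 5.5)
The plan is an induction on the number of iterations, applying Theorem~\ref{th:H_structure} at each step to the current remainder Hamiltonian. At step $k$ we have a traceless Hamiltonian $\hat H_{r,k-1}$ on $\mathcal{H}_{r,k-1}$, with $\hat H_{r,0}=\hat H$. The factorization $\mathcal{H}_{r,k-1}\simeq \mathcal{H}_{q,k}\otimes\mathcal{H}_{r,k}$ is a local minimum of the loss, so Theorem~\ref{th:H_structure} gives
\begin{equation}
\hat H_{r,k-1} \simeq \frac{E_{q,k}}{2}\,\sigma_z\otimes \mathbb{I} + \mathbb{I}\otimes \hat H_{r,k} + \sigma_z\otimes \hat X_{\mathrm{int},k}
\end{equation}
with $[\hat H_{r,k},\hat X_{\mathrm{int},k}]=0$. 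Unrolling the recursion, the total Hamiltonian is the sum of the self-energy terms $\frac{E_{q,i}}{2}\hat Z_i$ plus interaction pieces of the form $\hat Z_k\,\hat X_{\mathrm{int},k}$, where $\hat X_{\mathrm{int},k}$ acts on qubits $k+1,\dots,n$. Here $\hat H_{r,n-1}$ is identified with $\frac{E_{q,n}}{2}\sigma_z$ on the last qubit. The claim therefore reduces to showing that each $\hat X_{\mathrm{int},k}$, written in the product basis of the later qubits, is diagonal, i.e.\ a real combination of strings of $\mathbb{I}$ and $\sigma_z$.

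The key step is to show that every $\hat X_{\mathrm{int},k}$ commutes with each $\hat Z_j$ for $j>k$. By Theorem~\ref{th:H_structure} it commutes with $\hat H_{r,k}$. Expanding $\hat H_{r,k}$ at the next level, Corollary~\ref{corr:H_structure} says that $\frac{E_{q,k+1}}{2}\sigma_z\otimes\mathbb{I}$, $\mathbb{I}\otimes\hat H_{r,k+1}$ and $\sigma_z\otimes\hat X_{\mathrm{int},k+1}$ are simultaneously diagonalizable. I would go further and argue that, generically, commuting with $\hat H_{r,k}$ forces commuting with each of its constituent pieces. The reason is that the eigenspaces of $\hat H_{r,k}$ are (for non-degenerate spectrum) one-dimensional and coincide with the joint eigenbasis of the $\hat Z_j$, $j>k$, built up by the recursion. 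Inductively, the eigenvectors of $\hat H_{r,k}$ are the product states $\ket{s_{k+1}\dots s_n}$ in the $\sigma_z$ bases of the later qubits. This holds because $\hat H_{r,k}$ is block-diagonal in $\sigma_z$ of qubit $k+1$, with blocks $\pm\frac{E_{q,k+1}}{2}+\hat H_{r,k+1}\pm\hat X_{\mathrm{int},k+1}$. Both $\hat H_{r,k+1}$ and $\hat X_{\mathrm{int},k+1}$ commute with one another and, by the inductive hypothesis, are diagonal in the product basis of qubits $k+2,\dots,n$. Hence $\hat H_{r,k}$ is diagonal in the product basis, and any operator commuting with it is diagonal in that basis too, provided the spectrum is non-degenerate.

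The main obstacle is degeneracy. If $\hat H_{r,k}$ has degenerate eigenvalues, commuting with it does not force $\hat X_{\mathrm{int},k}$ to be diagonal in the product basis, since it could mix degenerate product states. I would handle this by exploiting the freedom inside degenerate eigenspaces. Inductively, we choose the $\sigma_z$ basis directions and the factorizations at later steps so that the product basis also diagonalizes the earlier $\hat X_{\mathrm{int},k}$. This is possible because the $\hat X_{\mathrm{int},k}$ (for $k$ up to the current step) all commute with $\hat H_{r,k}$ and, on the relevant factors, with each other, as all are diagonal in a common basis. Alternatively, I would state the corollary for generic Hamiltonians, where degeneracies have measure zero. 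Once diagonality in the product $\sigma_z$ basis is established, expanding a diagonal operator on $n-k$ qubits in Pauli strings yields only $\mathbb{I}$/$\sigma_z$ strings, and multiplying by $\hat Z_k$ preserves this form. Summing over $k$ gives the claim for $\hat H_{\mathrm{int}}$.
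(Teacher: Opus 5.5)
Your main argument is correct and uses the same mechanism as the paper's proof: an operator commuting with a diagonal, non-degenerate operator is diagonal, hence a sum of $\mathbb{I}/\sigma_z$ strings. The difference is that you supply what the paper's two lines skip. The paper asserts that applying Theorem~\ref{th:H_structure} and Corollary~\ref{corr:H_structure} at every iteration makes $\hat H_{\mathrm{int}}$ commute with $\hat H_{\mathrm{self}}$. It then says that no string containing $\sigma_x$ or $\sigma_y$ can commute with $\hat H_{\mathrm{self}}$.

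The first assertion is exactly your key step. At level $k$ the theorem only gives $[\hat X_{\mathrm{int},k},\hat H_{r,k}]=0$, and $\hat H_{r,k}$ contains the later pieces $\hat Z_j\hat X_{\mathrm{int},j}$. So commutation with each $\hat Z_j$, $j>k$, requires your backward induction. The second assertion tacitly assumes $\hat H_{\mathrm{self}}$ is non-degenerate: if $E_{q,1}=E_{q,2}$, then $\hat X_1\hat X_2+\hat Y_1\hat Y_2$ commutes with it. Your hypothesis, non-degeneracy of every $\hat H_{r,k}$, is what makes an argument based only on Theorem~\ref{th:H_structure} go through.

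The weak point is your degenerate-case paragraph, whose justification is circular. You need the earlier $\hat X_{\mathrm{int},j}$ to commute with $\hat H_{r,k}$ and with each other, and you support this by their being diagonal in a common basis, which is what is to be shown. In fact $\hat X_{\mathrm{int},j}$ is only known to commute with $\hat H_{r,j}$, which differs from $\hat H_{r,k}$ by intermediate self- and interaction terms. The fallback to generic $\hat H$ is also unsupported, since the $\hat H_{r,k}$ are outputs of the minimization, not generic matrices.

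No re-choice of factorizations is needed if you use more of the local-minimum hypothesis than Theorem~\ref{th:H_structure} records. Replace $\hat H$ by $\hat V_t^\dagger\hat H\hat V_t$ with $\hat V_t=|0\rangle\langle 0|\otimes\mathbb{I}+|1\rangle\langle 1|\otimes e^{tA}$, where $A$ is anti-Hermitian and $[A,\hat H_r]=0$. The interaction remains of the form $\sigma_z\otimes\hat X'_t$ with $\hat X'_t=\frac{1}{2}(\hat X_{\mathrm{int}}+e^{-tA}\hat X_{\mathrm{int}}e^{tA})$. The loss therefore equals $2\,\mathrm{Tr}(\hat X_{\mathrm{int}}^2)-\frac{t^2}{2}\|[\hat X_{\mathrm{int}},A]\|_{\mathrm{HS}}^2+O(t^3)$. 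At a genuine local minimum, $\hat X_{\mathrm{int}}$ must then commute with the entire commutant of $\hat H_r$, i.e.\ $\hat X_{\mathrm{int}}=f(\hat H_r)$ for a real function $f$. With $\hat X_{\mathrm{int},k}=f_k(\hat H_{r,k})$, your backward induction goes through with no spectral assumption. If $\hat H_{r,k+1}$ is diagonal in the product basis, so is $f_{k+1}(\hat H_{r,k+1})$, hence $\hat H_{r,k}$, hence $f_k(\hat H_{r,k})$.
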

\begin{proof}
If $\hat H_{\mathrm{int}}$ had support on a Pauli-string that contained either $\sigma_x$ or $\sigma_y$ on one of the qubit factors, then it could not be simultaneously diagonalizable with  $\sum_{i=1}^n \frac{E_{q,i}}{2}\, \hat Z_i\,$.
\end{proof}

If our iterative way of factorizing $\mathcal{H}$ into low-interaction qubits has produced a total interaction $\hat H_{\mathrm{int}}$ that is small compared to the self-Hamiltonian, then we have found a way of interpreting $\hat H$ as the Hamiltonian of an integrable lattice model plus a small, but potentially non-integrable correction. But we would still like to ensure that this theory yields any form of local dynamics in real space. In particular, we would like to 
\begin{itemize}
    \item[A)] assign momenta $k = k(i)$ to each of the qubit factors such that the energies $E(k) \equiv E_{i(k)}$ represent a meaningful dispersion relation;
    \item[B)] introduce Fermionic modes $\hat c_k$, such that $\lbrace \hat c_k , \hat c_{k'}^\dagger\rbrace = \delta_{k,k'}$ and $\hat Z_i = 2\hat c_{k(i)}^\dagger \hat c_{k(i)} - \mathbb{I}$;
    \item[C)] Fourier transform these momentum space modes to obtain real space modes $\hat a_x$ and express the Hamiltonian in terms of these.
\end{itemize}
For simplicity, let us assume here, that the total number of qubits is odd, \ie $n=2\ell+1$, and let us also assume that the energies $E_i$ have been ordered, \ie $E_i < E_{i+1}\ \forall i$. We then propose to achieve A) via the assignment
\begin{align}
\label{eq:assignment}
    E(k=\ell) &= E_n\nonumber \\
    E(k=\ell-1) &= E_{n-2}\nonumber \\
    \dots \nonumber \\
    E(k=1) &= E_{3}\nonumber \\
    E(k=0) &= E_{1}\nonumber \\
    E(k=-1) &= E_{2}\nonumber \\
    \dots \nonumber \\
    E(k=-\ell+1) &= E_{n-3}\nonumber \\
    E(k=-\ell) &= E_{n-1}\ ,
\end{align}
\ie we assign neighboring values $E_{i}\, \&\, E_{i+1}$ in the original ordering to opposite points $\pm k$ on the momentum grid. This ensures that the dispersion relation $E(k)$ is approximately symmetric and that it monotonically increases to either side of $k=0$.

Given such an ordering in $k$ we can proceed to define raising and lowering operators for each qubit as
\begin{equation}
    \hat b_k \equiv \frac{1}{2}(\hat X_k + i\hat Y_k)\ ,
\end{equation}
where $\hat X_k$ and $\hat Y_k$ are defined as the product operators that contain $\sigma_x$ or $\sigma_y$ in the $k$th factor and $\mathbb{I}$ in all other factors (in analogy to \eqnref{def_of_Zi_2nd}). These are still Bosonic operators and to meet condition B) from above we would like to obtain Fermionic (\ie anti-commuting) operators instead. This can be achieved via a Jordan-Wigner transformation, \ie by defining
\begin{equation}
    \hat c_k \equiv \exp\left(-i\pi \sum_{k'<k} \hat b_{k'}^\dagger \hat b_{k'}\right)\, \hat b_k\ .
\end{equation}
The operators $\hat Z_k$ (where we again use the new labels $k = k(i)\,$) are left invariant by this transformation, and they are given in terms of the new Fermionic operators as $\hat Z_k = 2\hat c_k^\dagger \hat c_k - 1\,$. Thus, in terms of the new operators the Hamiltonian becomes
\begin{equation}
\label{eq:from_H_to_field_3}
    \hat H \simeq \sum_{k=-\ell}^\ell E(k)\, \left(\hat{c_k}^\dagger \hat c_k - \frac{1}{2} \right) \ +\ \hat H_{\mathrm{int}}\ .
\end{equation}
The operators $\hat c_k\, ,\, \hat c_k^\dagger$ can now be interpreted as annihilating and creating the Fourier modes of the emergent lattice theory. From these Fourier space operators we can then define position space operators $\hat a_x$ via discrete Fourier transform,
\begin{equation}
    \hat a_x = \sum_{k=-\ell}^\ell \hat c_k\, \exp\left(i\, \frac{2\pi kx}{2\ell +1}\right)\ ,
\end{equation}
where $x$ also ranges in integer steps from $-\ell$ to $+\ell\,$. We finish this sub-section with one more corollary about the properties of the lattice theory we have now obtained.

\begin{corollary}
\label{corr:particle_number}
    If the vacuum state $\ket{0}$ is defined to be the state annihilated by all emergent Fourier space operators $\hat c_k$ or, equivalently, by all real space operators $\hat a_i$ then particle number - when defined \wrt to the emergent lattice of operators - is conserved by the Hamiltonian $\hat H$.

    Furthermore, in the basis $\ket{k} \equiv \hat c_k^\dagger \ket{0}$ of the one-particle subspace the Hamiltonian of our emergent lattice theory will always be diagonal, \ie
    \begin{equation}
        \bra{k}\hat H\ket{k'} \propto \delta_{k,k'}\ .
    \end{equation}
\end{corollary}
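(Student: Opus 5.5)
The plan is to build on Corollary~\ref{corr:Hint_from_Zs}. It says that after the iterative factorization the full Hamiltonian is a real linear combination of the identity and of products of the operators $\hat Z_k$. These are Pauli strings containing only $\mathbb{I}$ or $\sigma_z$ on each qubit factor. The first step is to rewrite every such product in terms of the emergent Fermionic operators. The Jordan-Wigner transformation leaves each $\hat Z_k$ invariant, and $\hat Z_k = 2\hat c_k^\dagger\hat c_k - 1 = 2\hat n_k - 1$. Hence $\hat H$ is a polynomial in the mode number operators $\hat n_k \equiv \hat c_k^\dagger\hat c_k$, i.e.\ $\hat H = F(\hat n_{-\ell},\dots,\hat n_\ell)$ for some real function $F$ on $\lbrace 0,1\rbrace^n$.

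For the first claim I will define the total particle number on the emergent lattice as $\hat N \equiv \sum_x \hat a_x^\dagger \hat a_x$. I then show $\hat N = \sum_k \hat c_k^\dagger \hat c_k$. This follows from the discrete Fourier orthogonality $\sum_x e^{i2\pi(k-k')x/(2\ell+1)} = (2\ell+1)\delta_{kk'}$. The Fourier transform as written is not normalized, so I will either absorb the factor $1/\sqrt{2\ell+1}$ into the definition or note that $\hat N$ is just rescaled; this affects neither commutation nor the vacuum. Since each $\hat n_k$ commutes with every $\hat n_{k'}$, they all commute with $\hat N$. Therefore $[\hat H,\hat N]=0$, and particle number is conserved. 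For the vacuum, I need a single state annihilated by all $\hat c_k$. It exists because $\hat c_k$ differs from $\hat b_k$ only by a sign string. So $\ket{0}$ is the product state with every qubit in the $\sigma_z = -1$ eigenstate (given $\hat Z_k = 2\hat n_k - 1$). Because the Fourier transform is invertible, this is equivalently the state annihilated by all $\hat a_x$.

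For the second claim, the states $\ket{k} = \hat c_k^\dagger\ket{0}$ are simultaneous eigenstates of all $\hat n_{k'}$, with eigenvalue $\delta_{kk'}$. They are in fact computational-basis product states, with qubit $k$ flipped up to a Jordan-Wigner sign. So $\hat H\ket{k} = F(\bm{e}_k)\ket{k}$, which gives $\bra{k}\hat H\ket{k'} = F(\bm{e}_k)\delta_{kk'}$. This is the stated diagonal form. The eigenvalue collects $E(k)$, the constant $-\tfrac12\sum E$, and the interaction contribution.

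I expect the main obstacle to be justifying that Corollary~\ref{corr:Hint_from_Zs} really applies to the \emph{full} iterated factorization. Theorem~\ref{th:H_structure} at step $j$ only constrains the split $\mathcal{H}_{r,j-1}\simeq\mathcal{H}_{q,j}\otimes\mathcal{H}_{r,j}$. It gives $\hat H_{r,j-1} = \tfrac{E_j}{2}\sigma_z\otimes\mathbb{I} + \mathbb{I}\otimes\hat H_{r,j} + \sigma_z\otimes\hat X_{{\rm int},j}$ with $[\hat H_{r,j},\hat X_{{\rm int},j}]=0$. Meanwhile the earlier interaction operators $\hat X_{{\rm int},i}$ for $i<j$ live on $\mathcal{H}_{r,j-1}$ and must also be diagonal in the later qubits. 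I would handle this by noting that $\hat X_{{\rm int},i}$ commutes with $\hat H_{r,i}$, and so shares its eigenbasis. One must then check that the later factorizations respect this eigenbasis; degenerate spectra may need an appropriate choice. Since the paper states Corollary~\ref{corr:Hint_from_Zs} as established, I will take it as given and rely on it directly. The remaining steps are then routine bookkeeping.
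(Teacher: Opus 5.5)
Your proposal is correct and follows essentially the same route as the paper. Both take Corollary~\ref{corr:Hint_from_Zs} as input, so that $\hat H$ is built only from products of the Jordan--Wigner invariant operators $\hat Z_k = 2\hat c_k^\dagger \hat c_k - 1$; these commute with the total particle number and act diagonally on the one-particle states $\ket{k}$. You are more explicit than the paper's two-line argument (normalization of the Fourier transform, existence of a common vacuum, the explicit eigenvalue). Your caveat about whether Corollary~\ref{corr:Hint_from_Zs} holds for the full iterated factorization, when spectra are degenerate, is legitimate, but it lies upstream of this corollary, which the paper likewise derives by taking Corollary~\ref{corr:Hint_from_Zs} as given.
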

\begin{proof}
In Corollary~\ref{corr:Hint_from_Zs} we had seen that the interaction Hamiltonian \wrt to the emergent lattice contains only Pauli strings with either $\hat Z_k$ or $\mathbb{I}$ in each factor. Hence, each term contributing to $\hat H_{\mathrm{int}}$ contains the same number of creation and annihilation operators. Also, all of these terms will be diagonal in the one-particle sub-space \wrt to the basis $\ket{k}\,$.
\end{proof}


\begin{figure*}
\centering
  \includegraphics[width=0.75\textwidth]{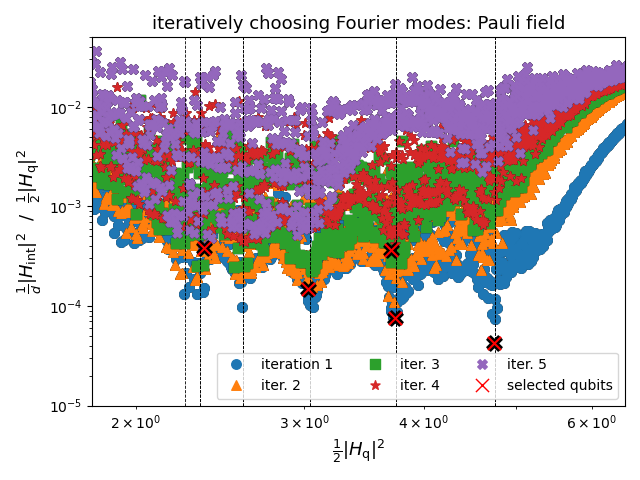}
  \includegraphics[width=0.75\textwidth]{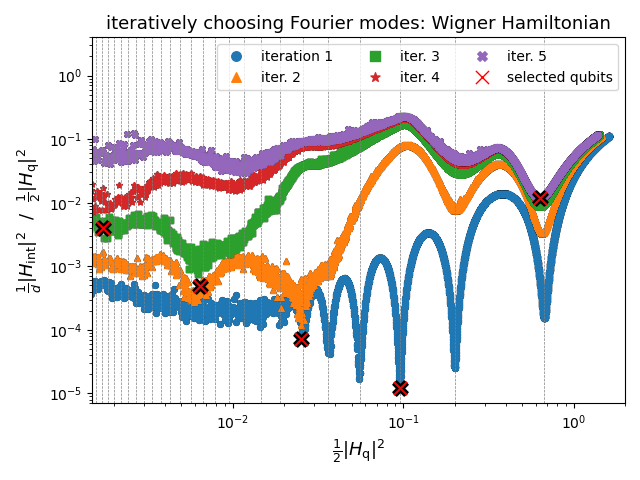}
   \caption{Visualizing our iterative procedure for factorizing a given Hilbert space into minimum-interaction qubits. At each iteration we consider the Hamiltonian $\hat H_{\mathrm{rest},i}$ that was left from the previous factorization as the new total Hamiltonian. We then sift through different norms $|\hat H_{q,i+1}|\sim E_{q,i+1}$ of the qubit self-Hamiltonian and find candidates for the rest-Hamiltonian $\hat H_{\mathrm{rest},i+1}$ such that interaction between next qubit and next rest is minimized. Among all of these candidates we then pick the qubit and rest that achieve the lowest ratio $|\hat H_{\mathrm{int},i+1}|\,/\,|\hat H_{q,i+1}|\,$. These selected factorizations are indicated by the red crosses in the figure. The upper panel displays this procedure for a Pauli field Hamiltonian, while the lower panel was starting with a random Hamiltonian drawn from the GUE. The Hilbert space dimension in both cases is $d=2^{11}\,$, \ie we do not show all iterations. Vertical dashed lines indicate qubit energies where we expect particularly low interaction strength (in the case of the Pauli field: the energies of the Fourier modes).}
   \label{fi:iterations}
\end{figure*}

\begin{figure*}
\centering
  \includegraphics[width=0.8\textwidth]{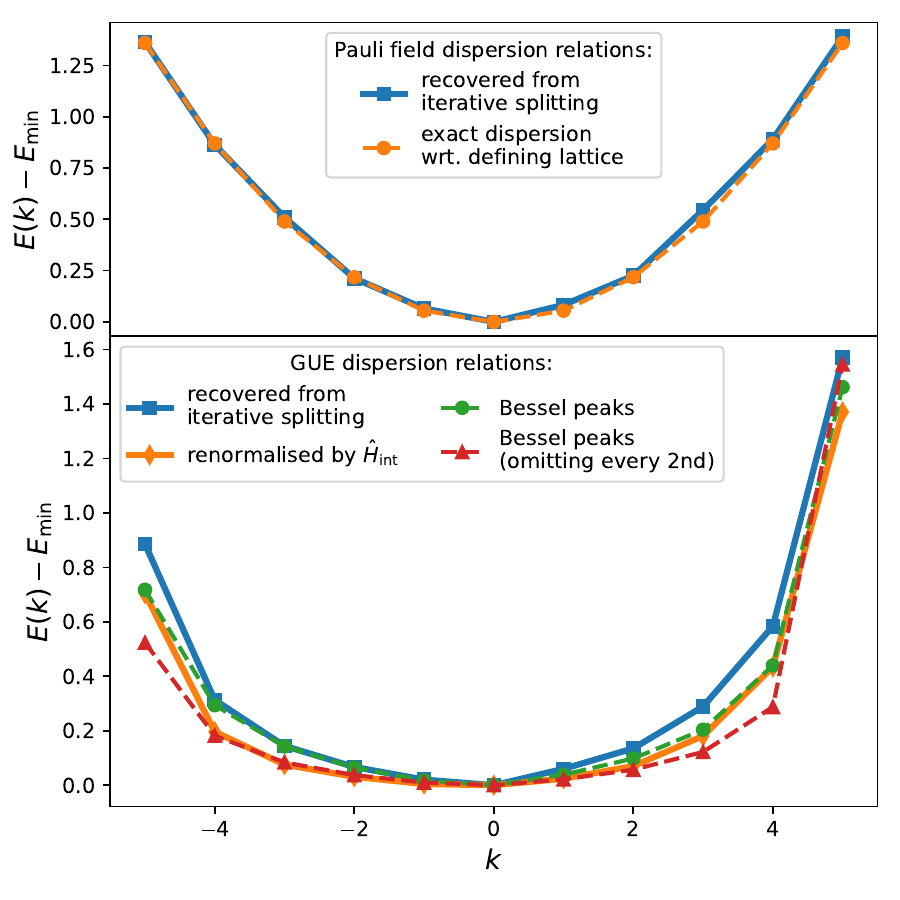}
   \caption{The dispersion relations we obtain when iteratively decomposing either the Pauli field Hamiltonian or a GUE Hamiltonian into a lattice of weakly interacting spins ("Fourier modes"). See main text for details.}
   \label{fi:dispersions_main_text}
\end{figure*}

\subsection{Application to Pauli field}
\label{sec:app_to_Pauli}

Let us first test the above algorithm with an integrable many-body Hamiltonian that has a known decomposition into a set of zero-interaction qubits. In particular, we will consider the Pauli field with Hamiltonian
\begin{equation}
\label{eq:Paul_real_space}
    \hat H = \int \dd x\ \hat a^\dagger(x) \left(m - \frac{1}{2m}\frac{\partial^2}{\partial x^2} \right) \hat a(x)\ ,
\end{equation}
where the $\hat a$ are Fermionic operators, \ie they satisfy the \emph{anti}-commutation relation $\lbrace \hat a(x), \hat a(x')\rbrace \sim \delta(x-x')\,$. The zero-interaction qubits are then exactly the Fourier modes of this field, in terms of which the Hamiltonian becomes
\begin{equation}
    \hat H \propto \int \dd k\ \hat c^\dagger(k) \left(m + \frac{k^2}{2m} \right) \hat c(k)\ .
\end{equation}
Since our algorithm works with finite-dimensional Hilbert spaces, we need to discretize this theory. A straightforward way to do so would be to replace $\partial^2/\partial x^2$ in \eqnref{Paul_real_space} by a finite-differences approximation on a finite grid. But for low dimensional grids this leads to a strongly distorted dispersion relation, and as we explain in Appendix~\ref{app:Dirac} we will use an alternative discretization scheme that is inspired by \cite{Singh2018}. Within this scheme, the Hamiltonian becomes
\begin{equation}
\label{eq:Pauli_field_H}
    \hat H = \sum_{i,j} M_{ij} \left( \hat a_i^\dagger \hat a_j - \frac{1}{2}\delta_{ij}\right)\ ,
\end{equation}
where $i$ and $j$ label the grid of real space positions, and where the eigenvalues of the coefficient matrix $M$ are exactly the Fourier mode energies $m+k^2/2m\,$ in units of the inverse grid spacing $(\Delta x)^{-1}$ .

The upper panel of Figure~\ref{fi:iterations} shows how our algorithm operates when starting from the spectrum of a Pauli field Hamiltonian with $m = 3/\Delta x$ that is defined on a grid of $n=11$ grid points. At each iteration we find candidate factorizations $\mathcal{H}_{r,i-1} \simeq \mathcal{H}_{q,i} \otimes \mathcal{H}_{r,i}$, which we display as points in the plane of $|\hat H_{\mathrm{int}}|^2 / |\hat H_q|^2$ vs.\ $|\hat H_q|^2$ , where $\hat H_q$ is the qubit self-Hamiltonian and $\hat H_{\mathrm{int}}$ is the interaction Hamiltonian in the corresponding factorization. We then pick the factorization that achieves the smallest ratio $|\hat H_{\mathrm{int}}|^2/|\hat H_q|^2\,$ --- marked by the large red crosses in Figure~\ref{fi:iterations} --- and treat the corresponding rest-Hamiltonian $\hat H_{r,i}$ as the total Hamiltonian in our next iteration.

Note however the following practical problem with our algorithm: different local minima of our initial loss function $\mathcal{L}\sim|\hat H_{\mathrm{int}}|^2$ will have different \enquote{catchment areas} and numerical approaches that search for these minima tend to strongly prefer certain minima over others. This makes it challenging to recover all relevant minima from a set of such searches. The situation is helped slightly by Theorem~\ref{th:H_structure} which tells us that the minimization can be carried out in the space of the spectra of $\hat H_q$ and $\hat H_r$, which is much smaller than the space of factorizations (see \cite{Loizeau2024} for how to implement such a minimization based on the spectra). But numerical minimization algorithms still tends to default to a restricted set of minima. We hence employ the following alternative scheme for obtaining sets of factorizations at each iteration: We sift through a set of possible spectra for the qubit self-Hamiltonians (which is equivalent to sifting through an array of values of $|\hat H_q|^2$) and find spectra for the corresponding rest-Hamiltonian $\hat H_r$ such that the norm of the interaction Hamiltonian $\hat H_{\mathrm{int}}$ is minimized (locally in the space of possible spectra of $\hat H_r$). This approach extends our coverage of minima to a wide range of possible qubit Hamiltonians. And the points in Figure~\ref{fi:iterations} were obtained in this manner.

Returning to the figure, note that there is a set of low-interaction peaks in the first iteration (blue circles). These peaks correspond to the Fourier energies of the Pauli field, \ie at the center of these peaks are factorizations whose qubit factors are exactly the Fourier mode factors of the theory. We demonstrate this with the vertical dashed lines which display the self-Hamiltonian norms $|\hat H_q|^2 \propto (m+k^2/2m)^2\,$ of these modes. The fact that we only sift through a finite grid of $|\hat H_q|^2$ prevents us from finding the exact Fourier modes. But at each iteration, our algorithm still picks and splits off a qubit that is close to one of these dashed lines. In particular, the excitation energies of the qubits that our algorithm is factoring out are in very close agreement with the exact dispersion relation $E(k) \propto m + \frac{k^2}{2m}$ of our Pauli field, as we demonstrate in the upper panel of \figref{dispersions_main_text}.

\begin{figure*}
\centering
\includegraphics[width=0.99\textwidth]{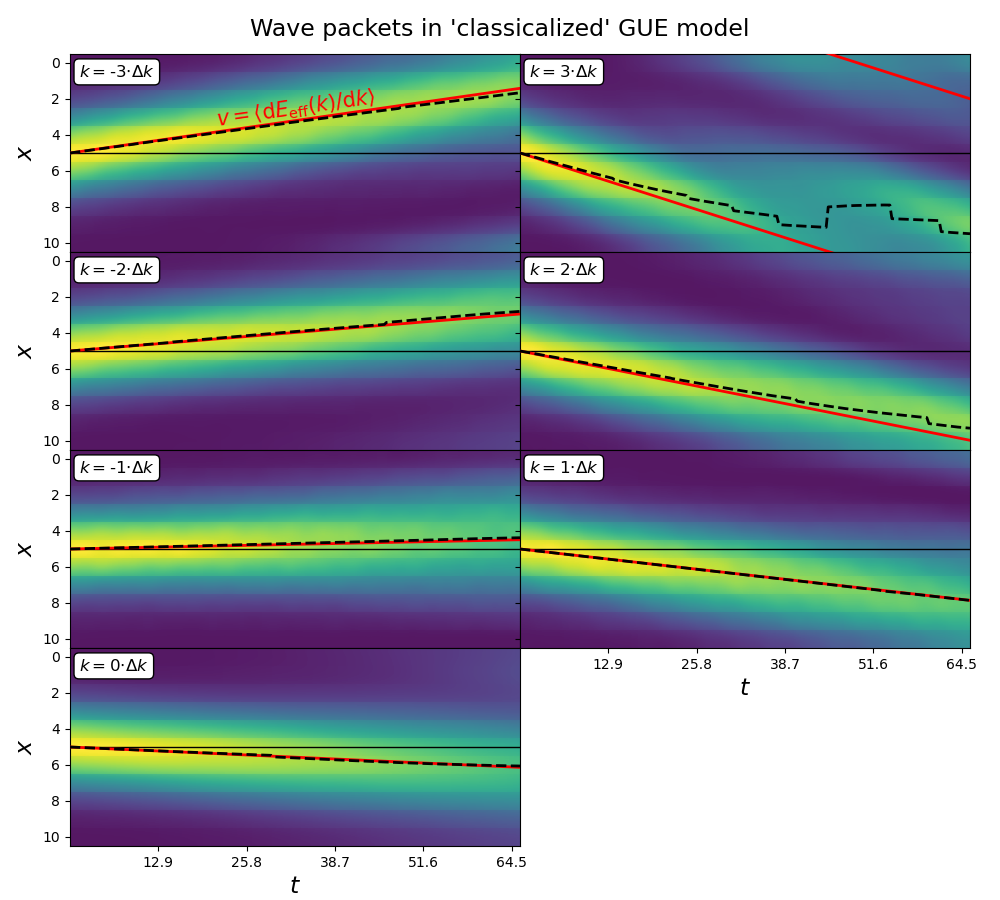}
   \caption{Starting from a Gaussian random Hamiltonian in a $d=2^{11}$ dimensional Hilbert space we identify field operators $\hat a_x^\dagger$ \wrt which stable, propagating wave packets can be defined. Different panels show how the density $|\psi(x,t)|^2$ evolves for wave packets of different average momenta $k$. Red solid lines indicate expected propagation speed from an effective dispersion relation, while black dashed lines indicate the actual center of the wave packets. Note that the packet in the upper most, right panel is dispersing quickly, because around its average momentum the dispersion relation is strongly curved. Because of the periodicity of our (small) grid, this leads to the interference pattern that is visible in that panel. We display the shape of the dispersion relation in \figref{dispersions_main_text}.}
   \label{fi:Wigner_packets}
\end{figure*}

\subsection{Application to GUE model: wave packets from random Hamiltonian}
\label{sec:app_to_GUE}

Let us now apply the same procedure to a Hamiltonian $\hat H$ that is drawn from the Gaussian Unitary Ensemble (GUE), \ie to a Wigner Hamiltonian. We present our GUE conventions in \appref{GUE}, and we also show there the following: If you pick an initial set of lattice creation and annihilation operators $\hat a_i\,,\,\hat a_i^\dagger$ and then decompose $\hat H$ into Pauli strings $\hat\tau_\ell$ \wrt that lattice,
\begin{equation}
    \hat H = \sum_\ell h^\ell\,\hat\tau_\ell\ ,
\end{equation}
(where we use the conventions stated in the beginning of this section, such that some strings contain an additional factor of $\mathrm{i}$) then drawing $\hat H$ from the GUE is equivalent to drawing the coefficients $h^\ell$ independently from a Gaussian distribution of variance
\begin{equation}
    \sigma^2(h^\ell) = \frac{1}{d}\, .
\end{equation}
This means that such an $\hat H$ will induce highly non-local dynamics \wrt the lattice-structure that was used to define the above Pauli-string decomposition. But our algorithm allows us to move away from that initial structure, and we will see that this can lead to much more well behaved dynamics.

We again consider a $d=2^{11}$ dimensional Hilbert space and iteratively identify $11$ minimum-interaction qubits. The operator norms $|\hat H_q|^2$ and $|\hat H_{\mathrm{int}}|^2$ we are sifting through in the first five of these iterations are displayed in the lower panel of Figure~\ref{fi:iterations}. Note that in the first iteration of our algorithm we find a discrete sequence of norms $|\hat H_q|$ for which the initial qubit displays a particularly low interaction with the rest of Hilbert space. We derive an understanding of these peaks in \secref{method}, and the vertical dotted lines that coincide with the peaks are a result of our calculations in that section. For now, we continue with our approach of iteratively selecting low-interaction qubits. This procedure returns a set of Fourier mode operators $\lbrace \hat c_k\rbrace$ and corresponding real space operators $\lbrace \hat a_x\rbrace$ in terms of which our Wigner Hamiltonian becomes
\begin{align}
    \hat H &= \sum_{x,y}M_{xy}\, \left(a_{x}^\dagger a_y - \frac{1}{2}\delta_{x,y}\right) + \hat H_{int} \nonumber \\
    &= \sum_{k} E(k) \left( c_{k}^\dagger c_k - \frac{1}{2}\right) + \hat H_{int} \ .
\end{align}
The blue, solid line in the lower panel of \figref{dispersions_main_text} displays the dispersion relation $E(k)$ we obtain this way. This bare dispersion relation will however be re-normalized by the interaction Hamiltonian $\hat H_{\mathrm{int}}$, which we find to be non-vanishing in the Wigner-case (as opposed to the Pauli-case we looked at in the previous subsection). This yields an effective dispersion relation
\begin{align}
    E_{\mathrm{eff}}(k) &= \bra{k} \hat H \ket{k}\nonumber \\
    &= E(k) + \bra{k} \hat H_{\mathrm{int}} \ket{k}
\end{align}
which is displayed as the solid, orange line in \figref{dispersions_main_text}. In the figure we also display different attempts at analytically estimating the dispersion relations - see \secref{method} for details. 

Taking $\ket{0}$ to be the state annihilated by all $\hat c_k$, we can build wave packet states with average momentum $k_0$, width in momentum space $\Delta$ and initial position $x_0$ like
\begin{equation}
    \ket{k_0,\Delta, x_0} \equiv \sum_k\, \mathrm{Gauss}(k|k_0,\Delta)\, e^{-ix_0 k}\,\ket{k}\ ,
\end{equation}
where $\ket{k}\equiv\hat c_k^\dagger\ket{0}$ are the single-particle plane wave states and where $\mathrm{Gauss}(k|k_0,\Delta)$ is the appropriate Gaussian one-particle wave function. With the help of the real-space \enquote{field} $\hat a_x$ we can then study the evolution of the position space wave function
\begin{equation}
    \psi(x,t) = \bra{0} \hat a_x\, \hat U(t) \ket{k_0,\Delta, x_0}\ ,
\end{equation}
where $\hat U(t) = \exp(-i\hat H t)$ describes the time evolution induced by $\hat H$. 

In Figure~\ref{fi:Wigner_packets} we display the absolute value $\rho(x,t) = |\psi(x,t)|^2$ for a number of different wave packets obtained by applying our algorithm to a Hamiltonian drawn from the GUE. In other words: we started from a Hamiltonian operator whose elements have literally been drawn at random, and we have been able to re-interpret the dynamics induced by this Hamiltonian as the dynamics of propagating particles. And the speed at which wave packets travel can be understood via the above mentioned effective dispersion relation in the usual way, i.e.\ $v = \langle \dd E_{\mathrm{eff}}/\dd k\rangle\,$.

In the light of Corollary~\ref{corr:particle_number} the results of \figref{Wigner_packets} are not surprising: \emph{any} Hamiltonian will - when re-interpreted as a lattice theory via our algorithm - preserve particle number and also be diagonal in the momentum basis of the one-particle Hilbert space. Hence - for any input Hamiltonian - the one-particle Hilbert space of our emergent lattice theory is guarantied to host wave functions that propagate according to the effective dispersion relation $E_{\mathrm{eff}}(k)\,$. Whether or not that new interpretation of $\hat H$ resembles a classical system will then only depend on the shape of $E_{\mathrm{eff}}(k)\,$. In the case of the GUE-example, the shape of $E_{\mathrm{eff}}$ enables (for the most part) localized wave packets to propagate and stay local. However, some artifacts are visible from \figref{Wigner_packets}: Firstly, note that $k=0$ is not the slowest wave packet. Also, note that an interference pattern starts to form in the upper most, right panel. The packet in that panel is centered around momenta where $E_{\mathrm{eff}}$ is strongly curved. Thus the packet disperses quickly and starts to meet itself from the other side of the periodic grid. We would expect both of these artifacts to become less severe in the limit of larger Hilbert space dimension.

Finally, note that we do not investigate here the interaction structure that $\hat H_{\mathrm{int}}$ induces \emph{between} particles in the higher $N$-particle sub-spaces.

\subsection{MDS embedding of the emergent lattices}

After we apply our sifting procedure, any Hamiltonian will take the form
\begin{equation}
    \hat H = \sum_{x,y}M_{xy}\, \left(a_{x}^\dagger a_y - \frac{1}{2}\delta_{x,y}\right) + \hat H_{int}
\end{equation}
in terms of the emergent real-space creation and annihilation operators $\hat a_{x}^\dagger\,,\, \hat a_x\,$. And the elements of the coefficient matrix $M_{xy}$ will be given in terms of the emergent dispersion relation $E(k)$ as
\begin{equation}
\label{eq:M_xy_comp}
    M_{xy} = \frac{1}{(2\ell+1)^2} \sum_{k=-\ell}^{\ell} E(k)\,  \exp\left(i\,\frac{2\pi k(x-y)}{2\ell+1}\right)\ .
\end{equation}
This only depends on the difference of $x$ and $y$, \ie the quadratic part of the Hamiltonian for our emergent lattice theory is translation invariant - regardless of the specifics of the total Hamiltonian we started with. Furthermore, $M_{yx} = M_{xy}^*$ such that we can consider $|M_{yx}|^2$ as a matrix of \enquote{hopping} terms in real-space with periodic boundary conditions. Following \cite{Cao2017} we use this matrix to define a metric between the lattice sites. We first calculate $D_{xy}=-\log(M_{xy}+\epsilon)\,$, with $\epsilon=10^{-12}$ ensuring that $D$ is finite. This defines a matrix of distances on the graph of lattice sites, from which we can obtain a metric by looking for the shortest path between any two sites,
\begin{equation}
\label{eq:MDS_metric}
    m_{xy}=\min_{\gamma}\left(\sum_{n=0}^{k_{\gamma}-1}D_{\gamma(n),\gamma(n+1)}\right)
\end{equation}
for all connected paths $\gamma$ and $k_{\gamma}$ the length of a path. Based on the above metric and using multidimensional scaling (MDS) as e.g.\ described in \cite{Cao2017}, we now look for the optimal embedding of the graph of lattice sites into lower dimensional Cartesian spaces. In particular, we want to test whether that optimal embedding is consistent with the supposed 1D nature of our emergent lattice.

\begin{figure*}
    \centering
    \includegraphics[width=0.9\linewidth]{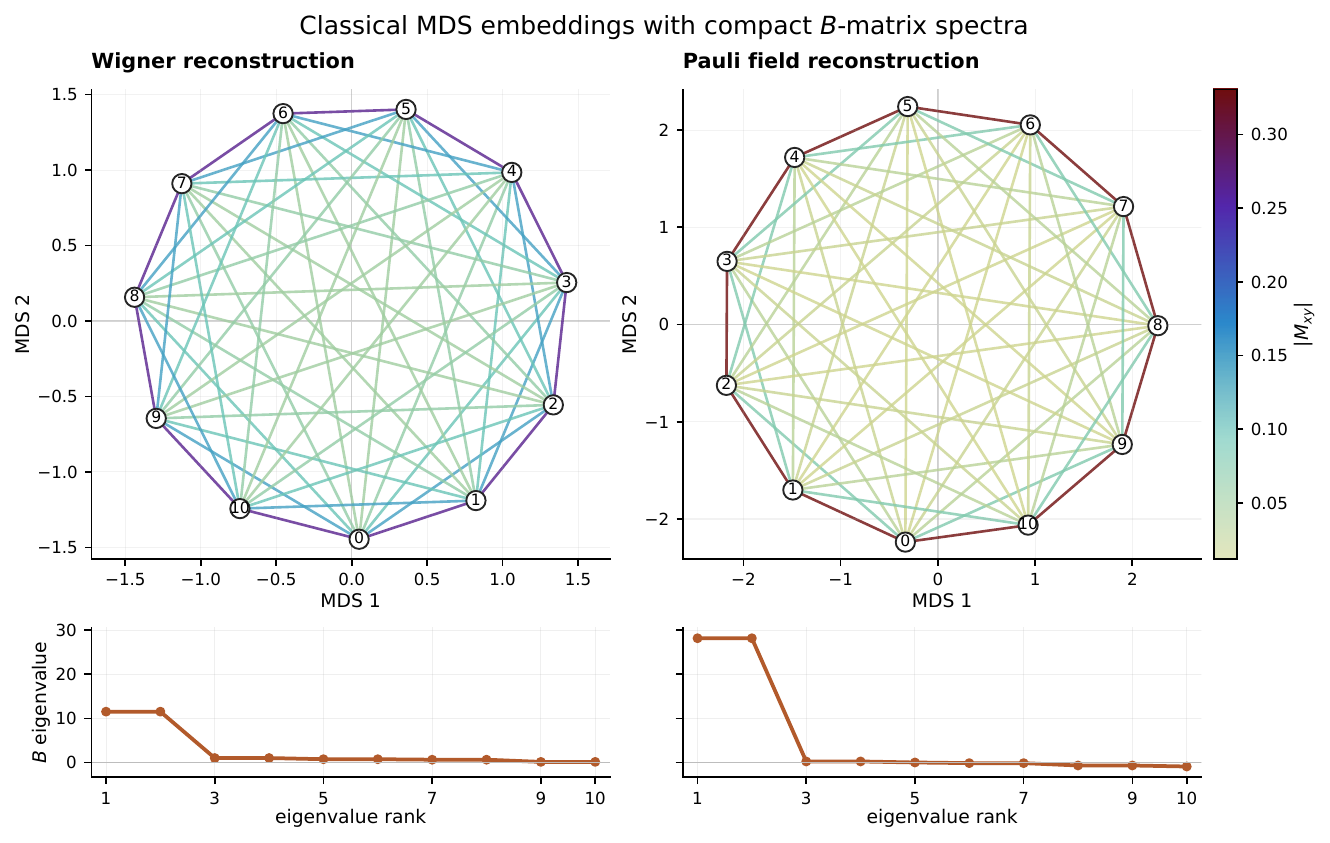}
    \caption{Classical MDS embeddings with compact $B$-matrix spectra: The top left panel shows the embedding of the spatial Fermionic degrees-of-freedom emerging from a Wigner Hamiltonian in a two-dimensional cartesian space via MDS. The top right panel shows the corresponding embedding for a Pauli field Hamiltonian. We overlay graph edges between embedded Fermionic degrees-of-freedom, with edge color indicating hopping term strength. Embedding coordinates MDS 1 and MDS 2 emerge as the dominant Cartesian dimensions of the embedding, as is implied by the bottom panels. These panels show the ordered spectrum of the $B$ matrix for Wigner (left) and Pauli field reconstruction (right). In classical MDS, $B$ is simply the double-centered squared distance matrix (where distance is calculated via the metric from \eqnref{MDS_metric}). Its eigenvalues quantify the spatial variance of the data along its Cartesian eigendimensions. $\lambda_1\approx \lambda_2> \lambda_{i>2}$ indicates MDS 1 and 2 are equally important and dominant compared to the others. The difference in scale between the Wigner- and Pauli-case comes from the fact that the emerged distances are different (larger for the Pauli field), due to differences in coupling strength.} 
    \label{fi:MDS_Embedding}
\end{figure*}

In both the Pauli field and GUE examples from above MDS should indeed be applicable to the hopping matrix $|M_{yx}|^2$ since in both cases we find $|H_{int}|$ to be of the order of a percent of the norm of the total Hamiltonian $\hat H$, implying that quadratic terms dominate over higher order terms of the form $|b_{ijk\cdots}|=|\mathrm{Tr}[Ha_ia_j^\dagger a_k\cdots]|$. The upper panel of \figref{MDS_Embedding} show's a 2D embedding of the distance matrix, extracted from the Wigner Hamiltonian (left) and from the Pauli field Hamiltonian. Unsurprisingly, the cyclicity in the hopping components (\cf \eqnref{M_xy_comp}) means that the optimal embedding of our construction in 2D is as a circle. The lower panel of \figref{MDS_Embedding} also hints at a circle being the natural emergent geometry with both first eigenvalues of the double-centered, squared distance matrix being approximately equal. The two-dimensional flatness of the data is indicated by all other eigenvalues being close to zero.


\section{Sequence of candidate degrees-of-freedom from the GUE spectrum}
\label{sec:GUE_species}
\label{sec:method}

\begin{figure*}
\centering
\includegraphics[width=0.9\textwidth]{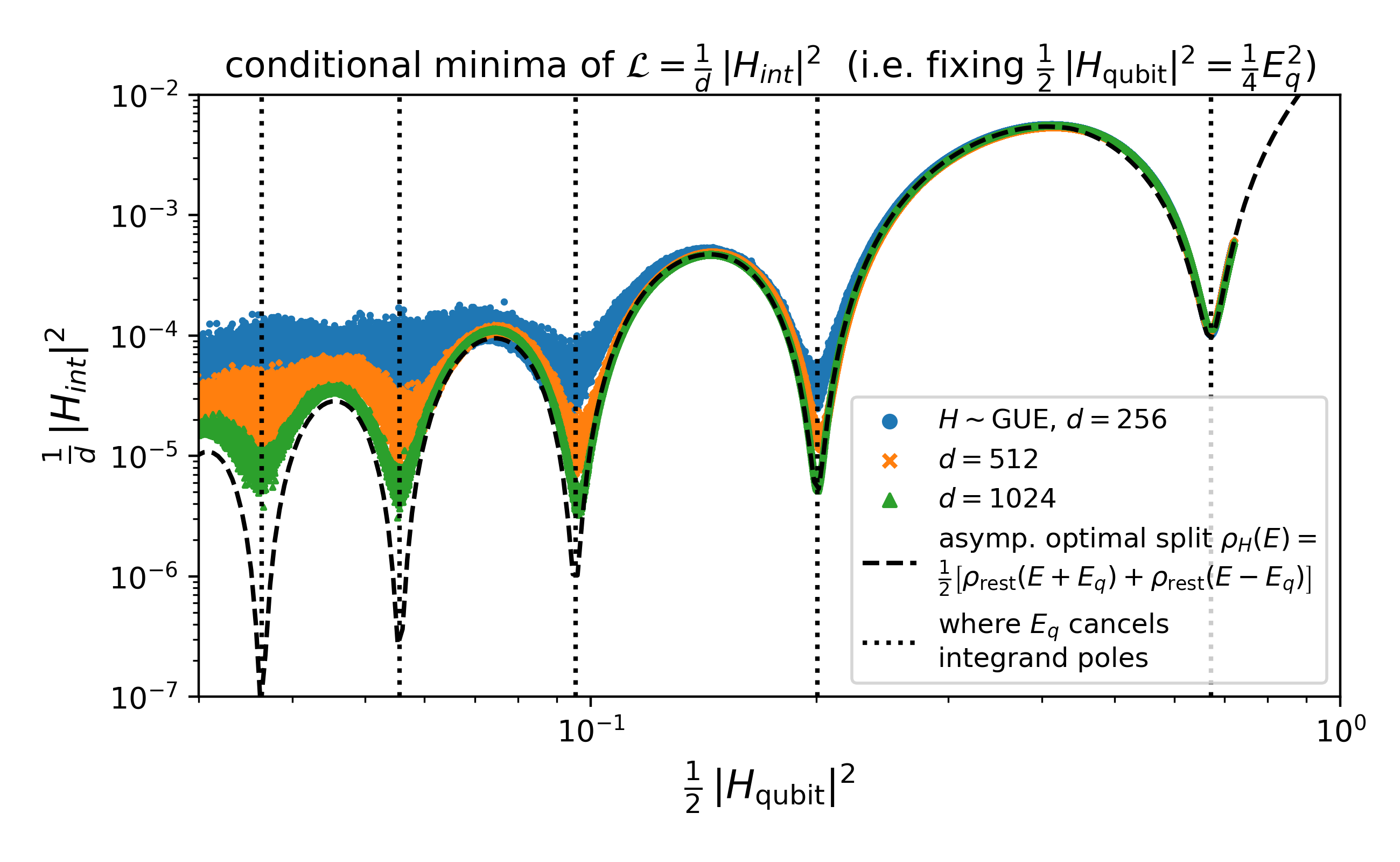}
   \caption{We show the squared norm of $\hat H_{\mathrm{int}}$ when fixing the energy gap $E_q = E_{\max} - E_{\min}$ of the qubit Hamiltonian (and thus fixing the norm of $\hat H_q$). These conditional minima closely follow the asymptotic lower bound we derived in \secref{method}. The vertical, dotted lines indicate the location of minimum-interaction-peaks that is predicted by \eqnref{minimum_int_peaks}.}
   \label{fi:minimum_distribution}
\end{figure*}

In the lower panel of \figref{iterations} we saw that there is a discrete set of qubit energies $E_q \sim |\hat H_q|$ for which the initial qubit selected by our algorithm would have particularly low interaction with the rest of Hilbert space. In this section we will derive an analytic understanding of that behavior based on the asymptotic eigenvalue density of GUE matrices. This will also help us to qualitatively model the dispersion relation of the lattice theory that emerged from the GUE Hamiltonian in \secref{app_to_GUE}. The main findings of this section can be summarized as follows:\\

\noindent \textbf{Result A.} \emph{There is an asymptotic ($d \rightarrow \infty$) lower bound for $\frac{1}{d}\mathrm{Tr}(\hat H_{\mathrm{int}}^2)$ as a function of the qubit excitation energy $E_q$, and this bound can be computed from the asymptotic GUE eigenspectrum.}
\\

\noindent \textbf{Result B.} \emph{There is a discrete set of excitation energies $E_q$ where the above bound dips particularly low, \ie where factorizations with particularly low interaction strength are possible. These energies $E_q$ can also be computed from the asymptotic GUE eigenspectrum.}
\\

\noindent To derive these results, let us suppose that there is indeed a factorization $\mathcal{H} \simeq \mathcal{H}_q \otimes \mathcal{H}_r$ for which the interaction part of the Hamiltonian is negligible compared to the auto-Hamiltonians $\hat H_q$ and $\hat H_r\,$. For such a factorization we should have 
\begin{align}
\label{eq:H_approx}
    \hat H\, \approx\, \frac{E_q}{2}\, \sigma_z \otimes \mathbb{I}_r\, +\, \mathbb{I}_q \otimes \hat H_{r}\ ,
\end{align}
where we again chose a basis in $\mathcal{H}_q$ such that $\hat H_q \propto \sigma_z\,$.

In order for a given factorization to satisfy the approximate \eqnref{H_approx}, the spectral density $\rho_{\mathrm{total}}$ of the total Hamiltonian should be a convolution of the spectral densities $\rho_{q}$ and $\rho_{r}$ of the self-Hamiltonians of the qubit and the rest, \ie
\begin{align}
\label{eq:relation_of_densities}
    \rho_{\mathrm{total}}(\lambda) &\approx \left(\rho_{q} * \rho_{r}\right)(\lambda) \nonumber \\
    &= \frac{1}{2}\left[\,\rho_r(\lambda - E_q/2) + \rho_r(\lambda + E_q/2)\,\right]\ .
\end{align}
Now let $\varphi_{\mathrm{total}}(k)$ and $\varphi_r(k)$ be the characteristic functions of $\rho_{\mathrm{total}}(\lambda)$ and $\rho_{r}(\lambda)$ respectively (i.e.\ the Fourier transforms of $\rho_{\mathrm{total}}$ and $\rho_{r}$). Then Fourier transforming \eqnref{relation_of_densities} leads to
\begin{align}
    \varphi_{\mathrm{total}}(k) &\approx \frac{1}{2}\varphi_{r}(k)\left[e^{-i\frac{k E_q}{2}} + e^{i\frac{k E_q}{2}}\right]\nonumber \\
    &= \varphi_{r}(k) \cos(k E_q/2) \nonumber \\
    \label{eq:invert_convolution_equation}
    \Rightarrow \varphi_{r}(k) &\approx \frac{\varphi_{\mathrm{total}}(k)}{\cos(k E_q/2)}\ .
\end{align}
This seems to indicate that for any value of $E_q$ we can find a rest-Hamiltonian $\hat H_r$ such that \eqnref{H_approx} is satisfied exactly by simply tuning the spectral distribution of $\hat H_r$ to be given by the inverse Fourier transform of \eqnref{invert_convolution_equation}, \ie
\begin{equation}
    \rho_r(\lambda) = \int \dd k\, e^{ik\lambda}\, \frac{\varphi_{\mathrm{total}}(k)}{\cos(k E_q/2)}\ .
\end{equation}
However, this may not be a proper probability density function (PDF). E.g.\ it may not even satisfy $\rho_r(\lambda) \geq 0$ for all $\lambda$. And how close $\rho_r$ comes to being an actual PDF may depend on $E_q\,$.

\begin{figure*}
\centering
\includegraphics[width=\textwidth]{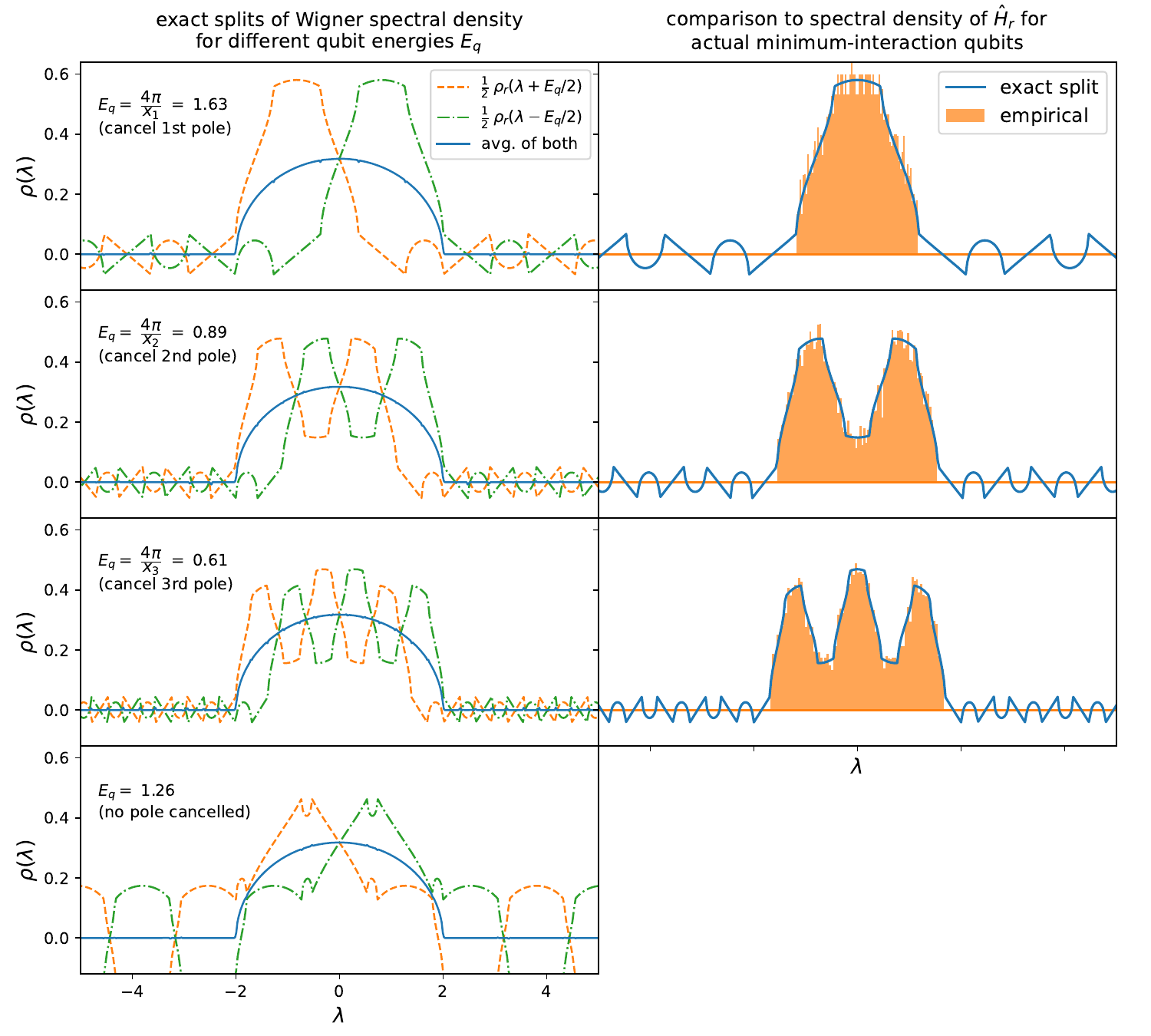}
   \caption{Left panels are showing exact solutions to the decomposition of \eqnref{relation_of_densities} for different energy gaps $E_q$ of the qubit Hamiltonian. Note that these solutions do not represent actual densities and can thus not be implemented by the spectral density $\rho_r$ of any rest-factor Hamiltonian $H_r\,$. But for some values of $E_q$ - cf.\ the upper three panels and \eqnref{minimum_int_peaks} - the exact solutions are well approximated by actual densities over the energy range relevant for the decomposition. This is demonstrated in the right panels where we compare these solutions to the actual spectral densities $\rho_r$ of the rest-factor Hamiltonians found for minimum-interaction qubits whose energy gap $E_q$ approximately satisfies \eqnref{minimum_int_peaks}.}
   \label{fi:finger_functions}
\end{figure*}

Since we are considering the total Hamiltonian to be drawn from the GUE, its eigenvalue density will tend (in distribution) towards the Wigner semicircle distribution
\begin{equation}
    \rho_{\mathrm{total}}(\lambda) \approx \frac{1}{2\pi} \sqrt{4 - \lambda^2}
\end{equation}
in the limit $d\rightarrow \infty\,$. The Fourier transform of this is
\begin{equation}
    \varphi_{\mathrm{total}}(k) \equiv \int \frac{\dd\lambda}{2\pi}\, \rho_{\mathrm{total}}(\lambda)\, e^{-ik\lambda} = \frac{J_1(2k)}{2\pi k}\ ,
\end{equation}
where $J_1$ is the first order Bessel function. Splitting $\hat H$ into a qubit-Hamiltonian and a rest-Hamiltonian should then lead to a spectral density of that rest-Hamiltonian that is given by
\begin{equation}
\label{eq:exact_split}
    \rho_r(\lambda) \approx \int \dd k\, e^{ik\lambda}\, \frac{J_1(2k)}{2\pi k \cos(k E_q/2)}\ .
\end{equation}
The right hand-side of this equation is not a purely positive function and is thus not a proper density - except for $E_q = 0$, in which case the emergent qubit would have trivial self-dynamics. 

To find approximate solutions to \eqnref{relation_of_densities} that are proper densities, let us re-interpret our loss function $\mathrm{Tr}(\hat H_{\mathrm{int}}^2)$ as
\begin{align}
\label{eq:loss_at_minimum}
    \mathcal{L}(\hat H_{\mathrm{int}}) \approx\sum_i [\lambda_i(\hat H) - \lambda_i(\hat H_{q} \otimes \mathbb{I}_r\, +\, \mathbb{I}_q \otimes \hat H_{r})]^2\ ,
\end{align}
where $\lambda_i(\hat X)$ is the $i$th ordered eigenvalue of the operator $\hat X\,$. The right hand-side of \eqnref{loss_at_minimum} is not generally equivalent to our loss function, but thanks to Theorem \ref{th:H_structure} it indeed equals $\mathrm{Tr}(\hat H_{\mathrm{int}}^2)$ at the local minima of that loss function. Now let $E_q$ be the qubit energy of a local minimum and let $\rho_r$ be the spectral density of the corresponding rest Hamiltonian $\hat H_r\,$. Then $\lambda_i(\hat H_{q} \otimes \mathbb{I}_r\, +\, \mathbb{I}_q \otimes \hat H_{r})$ is approximately given by
\begin{align}
\label{eq:lambda_quantile_qr}
    &\lambda_i(\hat H_{q} \otimes \mathbb{I}_r\, +\, \mathbb{I}_q \otimes \hat H_{r}) \approx\nonumber \\
    &\frac{1}{1/d} \int_{l_{i-1}}^l \dd l\ l\ \frac{1}{2}\left(\rho_r(l - E_q/2) + \rho_r(l + E_q/2)\right)\ ,
\end{align}
where the interval $[l_{i-1}, l_i]$ is the $i$th of a number of $d$ equal-probability quantiles of the distribution $\frac{1}{2}\left(\rho_r(l - E_q/2) + \rho_r(l + E_q/2)\right)\,$. Similarly, we have
\begin{equation}
\label{eq:lambda_quantile_H}
    \lambda_i(\hat H) \approx \frac{1}{1/d} \int_{r_{i-1}}^r \dd r\ r\ \frac{1}{2\pi} \sqrt{4 - r^2}\ ,
\end{equation}
where the interval $[r_{i-1}, r_i]$ contains the $i$th equal-probability quantile of the semi-circle distribution. We can now plug Equations~\ref{eq:lambda_quantile_qr} and \ref{eq:lambda_quantile_H} into \eqnref{loss_at_minimum} and minimize this loss function subject to the conditions $\rho_r(\lambda) \geq 0$ and $\int \dd\lambda\, \rho_r(\lambda) = 1\,$. In practice, we do this by expanding \eqnref{loss_at_minimum} to 2nd order in the deviation between $\rho_r$ and the exact, unconstrained solution from \eqnref{exact_split} and then solving the resulting constrained, quadratic problem. We provide a notebook within the \verb|GPUniverse| code package that solves this problem numerically (\url{https://github.com/OliverFHD/GPUniverse}).

The solution to the above quadratic problem for a given fixed $E_q$ gives us an estimate of the lowest loss that can be asymptotically achieved in any Hilbert space factorization with qubit energy $E_q$. The black dashed line in \figref{minimum_distribution} displays this optimum loss as a function of $\frac{1}{2}\mathrm{Tr} \hat H_q^2 = \frac{1}{4}E_q^2$. We also display the actual values of $\mathrm{Tr}(\hat H_{\mathrm{int}}^2)$ we obtain when sifting through $E_q$ and finding the corresponding rest-spectrum that minimizes interaction (as we had already done for the lower panel of \figref{iterations}). Different symbols show these minimum interaction strengths obtained for Hamiltonians drawn from the GUE in different dimensions.

As can be seen in \figref{minimum_distribution}, the lower bound we derived displays a spiked pattern, \ie there seems to be a discrete set of qubit energies $E_q$ for which particularly low loss (\ie particularly low interaction between the qubit and the rest of Hilbert space) can be achieved. This happens because for certain values of $E_q$ the Bessel function in \eqnref{exact_split} cancels one of the poles of the integrand, thus suppressing the otherwise strongly oscillating behavior of the solutions to the integral. This makes the exact solutions to the unconstrained problem of \eqnref{relation_of_densities} closer to being proper distributions, thus leading to a lower loss also for the solution to the constrained problem. We demonstrate this in the left-hand side of \figref{finger_functions} where the upper 3 panels show exact solutions to \eqnref{relation_of_densities} with $E_q$ chosen such that a pole cancellation happens, while the lowest panel is using a random value for $E_q\,$. From \eqnref{exact_split} it is clear that a pole cancellation will happen when
\begin{equation}
\label{eq:minimum_int_peaks}
    \frac{E_q}{2} \approx \frac{2\pi}{x_i}\ ,
\end{equation}
where $x_i$ is the $i$th zero of the Bessel function $J_1\,$. We display these values for $E_q$ as vertical dotted lines in the lower panel of \figref{minimum_distribution}, and they indeed coincide with the downward spikes in the interaction-lower-bound.

For $d=512$ we also compare the actual eigenvalue distribution of $\hat H_r$ at local minima in three different spikes to the corresponding exact solutions of \eqnref{relation_of_densities} in the right-hand side of \figref{finger_functions}. The approximate agreement between the latter and the former is further confirmation of our above interpretation of the spikes.

We conclude this section by returning to the lower panel of Figure~\ref{fi:iterations}. For the first iteration of our algorithm we had found there the same low-interaction-peak structure that we now encountered in \figref{minimum_distribution}. And as explained above, the location of these peaks corresponds exactly to the qubit energies $E_q$ where one of the poles of the integrand in \eqnref{exact_split} is canceled by the Bessel function that appears there. Some of these \enquote{Bessel peaks} persist in later iterations, and a number of the iteratively identified qubits are indeed chosen from a sub-set of the initial peaks (\cf some of the red crosses in the figure). This seems to indicate that qubits from these peaks are compatible in the sense that they live in each other's \enquote{rest-part} $\mathcal{H_{\mathrm{rest}}}$ of Hilbert space. But notably not all peaks persist: \eg the peak at $\frac{1}{2}|\hat H_q|^2\sim 10^{-1}$ vanishes already after the first iteration despite that iteration not selecting a qubit from said peak. We argue in \secref{towards_field_mereology} that this means that some of the peaks represent \emph{overlapping degrees-of-freedom} that don't live on mutually compatible Hilbert space factors. Such overlapping dofs have \eg been considered by \cite{Friedrich2024} (see also \cite{Chao2017, Cao:2023gkw}) to model quantum fields that satisfy a holography-inspired entropy bound. And Figure~\ref{fi:iterations} may indeed indicate that we should consider a set of emergent overlapping qubits rather than a set of perfectly separated ones in order to optimally approximate a close-to-classical lattice theory. 

We postpone a discussion of this to \secref{towards_field_mereology} and for now use our understanding of the low-interaction peaks for attempts at modeling the dispersion relation $E(k)$ that we found for Figure~\ref{fi:dispersions_main_text}. A first attempt may be to simply pick the first $n=11$ qubit energies $E_q$ for which a zero of the Bessel function cancels an integrand pole. The energies $\lbrace E_q \rbrace$ may then be sorted according to the assignment scheme we presented around \eqnref{assignment} to obtain the corresponding dispersion relation. This is displayed as the green dashed line in the lower panel of \figref{dispersions_main_text}. But as we have seen, our iterative scheme does not pick qubits from every of the initial Bessel peaks. So with the red dashed line we also display the dispersion that is obtained by selecting qubits only from every 2nd peak. Neither of these approximations perfectly matches the $E(k)$ that was actually returned by our algorithm. This is in part because omitting every 2nd Bessel peak is too simplistic and does not accurately describe which peaks are and are not selected by our algorithm. And it is in part because at any finite dimension the set of minimum-interaction qubits only resolves a finite subset of the peaks - see \eg \figref{minimum_distribution}. Nevertheless, the approximation qualitatively captures the behavior of $E(k)$ and should become more accurate for higher dimensions. Of course, $E(k)$ is only the bare dispersion relation of our emergent lattice theory. Characterizing the shape of the re-normalized dispersion $E_{\mathrm{eff}}(k)$ is a major open task required in order to \textit{a priori} understand the dynamics of that theory.


\section{Discussion: towards quantum mereology for fields}
\label{sec:discussion}
\label{sec:towards_field_mereology}

With this work we wanted to investigate how quantum mereology can be formulated such that it returns field-like degrees-of-freedom. The algorithm we propose starts with a Hamiltonian $\hat H$ on a Hilbert space $\mathcal{H}$ of dimension $d=2^n$ and identifies $n$ Fermionic creation and annihilation operators $\hat c_k^\dagger\,,\, \hat c_k$ such that the degrees-of-freedom represented by these operators have only small interaction. We then treat these operators as the Fourier modes of an emergent lattice theory and, after Fourier transforming, arrive at a lattice of operators $\hat a_x^\dagger\,,\, \hat a_x$ in terms of which $\hat H$ takes the form
\begin{equation}
\label{eq:Hxy_in_discussion}
    \hat H = \sum_{x,y}M_{xy}\, \left(\hat a_{x}^\dagger \hat a_y - \frac{1}{2}\delta_{x,y}\right) + \hat H_{int}\ .
\end{equation}
Theorem~\ref{th:H_structure} and Corollaries~\ref{corr:H_structure}-\ref{corr:particle_number} describe general properties of this emergent lattice theory that hold regardless of the input Hamiltonian $\hat H\,$. In particular, we showed that the one-particle Hilbert space of the theory will always be a theory of wave functions that propagate according to an effective dispersion relation $E_{\mathrm{eff}}(k)$ whose shape depends on the spectrum of $\hat H\,$.

Our algorithm was able to correctly recover the defining lattice structure of a Pauli field Hamiltonian. Encouraged by this consistency test, we went on to apply the algorithm to a Hamiltonian that exhibits extreme non-locality \wrt some initial lattice structure (random Gaussian coefficients in front of \emph{all} possible Pauli strings). We were able to switch to a new lattice structure that turns this Hamiltonian into a theory of particles that can propagate and stay localized for some time. 

The above results can be considered a small success in the direction of quantum mereology for fields. But a large number of questions still need to be addressed in order to formulate a map from Hamiltonian spectra to closest-to-classical (lattice) field theories. In the following, we close by summarizing some particularly important open points.
\begin{itemize}
    \item \textbf{Loss function}
    
    Throughout this paper we have defined our candidate degrees-of-freedom as those qubit factors that minimize the norm of the interaction Hamiltonian. This is only one of many choices for defining a mereology loss function, see \eg the overview of different mereology approaches we gave in \secref{intro}. Note in particular, that we applied our loss function at each individual step of the iterative procedure we proposed in Section~\ref{sec:iterative_algorithm}. One may \eg attempt to replace that with a global loss function that judges the full decomposition of Hilbert space into the full set of degrees-of-freedom. Such a loss function may \eg measure the degree of locality of the interaction Hamiltonian \wrt the emergent lattice.

    \item \textbf{Harmonic vs. real space modes}

    We have treated the minimum interaction qubits that are returned by our procedure as the Fourier modes of an emergent lattice theory. This is inspired by the fact that Fourier modes are indeed the minimum-interaction (and actually: zero-interaction) degrees-of-freedom of free field theories / of integrable lattice models. But for theories with sufficiently strong interaction, this may not be the case anymore. It was \eg shown by \cite{Piazza2010}, that for certain types of interacting lattice theories the real-space lattice modes constitute the minimum-interaction degrees-of-freedom.
    
    \item \textbf{Beyond spin-$\frac{1}{2}$ lattice}

    The field theories of our experience contain Bosonic and Fermionic fields of different spins, and interactions between those fields. From a mereology point-of-view this raises the question whether a given quantum theory - as \eg characterized by its Hamiltonian spectrum or any other data - can be factorized into locally interacting lattices of different spins, and which of these spin lattices comes closest to resembling a local field theory. For Bosonic lattices, this will only be the case if the Hilbert space considered is infinite dimensional (which may require significant extension of the framework we discussed here), or if the Bosonic degrees-of-freedom are approximated by finite dimensional constructions as has \eg been suggested by \cite{Singh2018, Cao2019_essay, Friedrich2022}.

    \item \textbf{Optimal lattice dimension}

    We have only considered turning a given quantum theory into a 1-dimensional lattice theory. In particular, around \eqnref{assignment} we mapped the self-energies of our emergent qubits onto a 1-dimensional dispersion relation. This mapping can easily be extended to dispersion relations in higher dimension, and for a given Hamiltonian spectrum this mapping may work better (in the sense of returning a closer approximation to a local field theory) in some dimensions than in others.
    
    \item \textbf{Overlapping degrees-of-freedom}

    In \secref{method} we showed that GUE Hamiltonians asymptotically allow for an entire sequence of low-interaction qubits. But when we applied our factorization procedure iteratively, some elements of this sequence where omitted. This indicates that not all of the peaks in Figure~\ref{fi:minimum_distribution} represent mutually compatible degrees-of-freedom. In other words: any two peaks in that figure represent two low-interaction factorization
    \begin{align}
        \mathcal{H} &\simeq \mathcal{H}_{q_1}\otimes \mathcal{H}_{r_1}\nonumber \\
        &\simeq \mathcal{H}_{q_2}\otimes \mathcal{H}_{r_2}\nonumber
    \end{align}
    but in general
    \begin{equation}
        \mathcal{H} \not\simeq \mathcal{H}_{q_1}\otimes\mathcal{H}_{q_2}\otimes \mathcal{H}_{r_{12}}\nonumber\ .
    \end{equation}
    This is reminiscent of the concept of \enquote{overlapping degrees-of-freedom}\cite{Chao2017, Cao:2023gkw, Friedrich2024}. In \cite{Friedrich2024} some of us had used such dofs to build a quantum field that satisfies a holographic maximum-entropy bound.

    \hspace{\parindent} The algorithm we proposed here for extracting lattice theories from a given Hamiltonian spectrum is not allowing for such overlapping dofs and instead insists on identifying an exact factorization. This means that some of the dofs we extract will have stronger interaction than the low-interaction peaks that were omitted by our algorithm. Allowing a mereology algorithm to extract sets of overlapping dofs may thus yield closer approximations to local lattice theories (as was already speculated in \cite{Friedrich2024}).

    \hspace{\parindent} Note also the following: in a regime where minimum-interaction dofs are the real-space modes of the emergent lattice theory (\cf our discussion of the results of \cite{Piazza2010} above) then the mode overlaps would represent kinematical non-locality like it is \eg present in gravitationally dressed operator fields \cite{donnelly+giddings2016_1}.
    
    \item \textbf{Dynamical structure vs.\ information structure}
    
    We proposed an algorithm for re-interpreting a given Hamiltonian as approximately local dynamics \wrt a set of field-like degrees-of-freedom. And to achieve that we indeed only required knowledge of the Hamiltonian, \ie we did not need to specify a state. This will be insufficient in the context of quantum gravity, where it seems to be the mutual quantum information that a given state induces between Hilbert space factors that determines which of these factors host spacetime (or \enquote{IR}) degrees-of-freedom and which host matter (or \enquote{UV}) degrees-of-freedom \cite{Cao2017, CaoCarroll2018, Jacobson2016} . In Appendix~\ref{app:Wishart} we explore how our algorithm could be modified to apply to states and their quantum information structure (though with very limited scope).
    
    \item \textbf{Remaining ambiguities in defining $\hat c_k^\dagger\,,\,\hat c_k$}

    We had seen in Corollary~\ref{corr:Hint_from_Zs}, that our emergent field operators $\hat c_k^\dagger\,,\,\hat c_k$ appear in the Hamiltonian only via the combination
    \begin{equation}
        \hat Z_k = 2\, \hat c_k^\dagger \hat c_k -1\ .
    \end{equation}
    A consequence of this is that the Hamiltonian spectrum together with our algorithm really only fixes the operators $\hat Z_k$ of the emergent lattice dofs, while leaving some ambiguity in how the operators $\hat c_k^\dagger\,,\,\hat c_k$ should be chosen. Let us \eg for any particular mode $k$ use a basis where
    \begin{align}
       \hat Z_k\, &=\, \begin{pmatrix}
           1 & 0 \cr 0 & \textrm{-}1
       \end{pmatrix}
       \otimes \mathbb{I}_r\nonumber \\
       &=\, \begin{pmatrix}
       1 & 0 & \dots & 0 & 0 & \dots \cr 0 & 1 & \dots & 0 & 0 & \dots \cr \vdots & \vdots & \ddots & \vdots & \vdots & \ddots \cr 0 & 0  & \dots & \textrm{-}1 & 0 & \dots \cr 0 & 0  & \dots & 0 & \textrm{-}1 & \dots \cr 0 & 0  & \ddots & \vdots & \vdots & \ddots
       \end{pmatrix}\ .
    \end{align}
    Then any unitary transformation of the form
    \begin{align}
    \label{eq:dressing_choice}
        U\, &=\, \exp\left\lbrace i\begin{pmatrix}
        0 & 0 \cr 0 & 1
        \end{pmatrix} 
        \otimes \begin{pmatrix}
        \theta_1 & 0 & \dots & 0 \cr 0 & \theta_2 & \dots & 0 \cr \vdots & \vdots & \ddots & \vdots \cr 0 & 0  & \dots & \theta_{d/2}
        \end{pmatrix}
        \right\rbrace \nonumber\\
    \end{align}
    will leave $\hat Z_k$ invariant, but it will change the creation and annihilation operators $\hat c_k^\dagger\,,\,\hat c_k\,$. How we should interpret this ambiguity may depend on the answers to the other questions we raised above. If the operators $\hat c_k^\dagger\,,\,\hat c_k\,$ indeed represent real-space modes (and not harmonic modes like we have assumed here) then the ambiguity would be reminiscent of a local gauge freedom - with the Hamiltonian being independent of gauge choice. If we allowed for the extraction of overlapping degrees-of-freedom, then the unitaries in \eqnref{dressing_choice} may change overlaps between the emergent lattice operators and thus be reminiscent of a choice of dressing.
    
\end{itemize}

\section*{Acknowledgements}
\textcopyright 2026. We we would like to thank Sean Carroll and Rogerio Rosenfeld for helpful input to this manuscript. And we would like to thank Arsalan Adil, Andy Albrecht, Laurenz Kohlbach, Varun Kushwaha, and Paul Schneidewind-Telge for helpful discussions. O.F.\ was supported by a Fraunhofer-Schwarzschild-Fellowship at Universitätssternwarte München (LMU observatory) and by DFG's Excellence Cluster ORIGINS (EXC-2094 – 390783311). N.L. was supported by a research grant (42085) from Villum Fonden. We are grateful for the invaluable work of the teams of the public python packages \verb|NumPy| \cite{NumPy}, \verb|SciPy| \cite{scipy}, \verb|CuPy| \cite{Okuta2017CuPyA} and \verb|Matplotlib| \cite{Matplotlib}. No generative AI was used to generate text for this manuscript. When designing GPU-driven code to solve the constrained optimization problem that yields the black dashed line of \figref{minimum_distribution} we benefited from interaction with GPT-5. We thank LMU for the providing a GPT license and everyone - including especially contributors who volunteer their coding expertise in online forums - that helped to make GPT-5 such an incredibly useful coding tool!

\section*{Data availability} 

\verb|Python| tools to predict interaction-strength bounds as a function of Hamiltonian spectral density are publicly available as part of the \verb|GPUniverse|  package, \url{https://github.com/OliverFHD/GPUniverse}~. Our code for finding minimum-interaction qubits from a given Hamiltonian is based on \url{https://github.com/nicolasloizeau/quantum_mereology}~.

\def\aj{AJ}%
\def\araa{ARA\&A}%
\def\apj{ApJ}%
\def\apjl{ApJ}%
\def\apjs{ApJS}%
\def\ao{Appl.~Opt.}%
\def\apss{Ap\&SS}%
\def\aap{A\&A}%
\def\aapr{A\&A~Rev.}%
\def\aaps{A\&AS}%
\def\azh{AZh}%
\def\baas{BAAS}%
\def\jrasc{JRASC}%
\def\memras{MmRAS}%
\def\mnras{MNRAS}%
\def\pra{Phys.~Rev.~A}%
\def\prb{Phys.~Rev.~B}%
\def\prc{Phys.~Rev.~C}%
\def\prd{Phys.~Rev.~D}%
\def\pre{Phys.~Rev.~E}%
\def\prl{Phys.~Rev.~Lett.}%
\def\pasp{PASP}%
\def\pasj{PASJ}%
\def\qjras{QJRAS}%
\def\skytel{S\&T}%
\def\solphys{Sol.~Phys.}%
\def\sovast{Soviet~Ast.}%
\def\ssr{Space~Sci.~Rev.}%
\def\zap{ZAp}%
\def\nat{Nature}%
\def\iaucirc{IAU~Circ.}%
\def\aplett{Astrophys.~Lett.}%
\def\apspr{Astrophys.~Space~Phys.~Res.}%
\def\bain{Bull.~Astron.~Inst.~Netherlands}%
\def\fcp{Fund.~Cosmic~Phys.}%
\def\gca{Geochim.~Cosmochim.~Acta}%
\def\grl{Geophys.~Res.~Lett.}%
\def\jcap{JCAP}%
\def\jcp{J.~Chem.~Phys.}%
\def\jgr{J.~Geophys.~Res.}%
\def\jqsrt{J.~Quant.~Spec.~Radiat.~Transf.}%
\def\memsai{Mem.~Soc.~Astron.~Italiana}%
\def\nphysa{Nucl.~Phys.~A}%
\def\physrep{Phys.~Rep.}%
\def\physscr{Phys.~Scr}%
\def\planss{Planet.~Space~Sci.}%
\def\procspie{Proc.~SPIE}%

\bibliographystyle{iopart-num}
\bibliography{main}

@ARTICLE{Jacobson2016,
       author = {{Jacobson}, Ted},
        title = "{Entanglement Equilibrium and the Einstein Equation}",
      journal = {\prl},
         year = 2016,
        month = may,
       volume = {116},
       number = {20},
          eid = {201101},
        pages = {201101},
          doi = {10.1103/PhysRevLett.116.201101},
archivePrefix = {arXiv},
       eprint = {1505.04753},
 primaryClass = {gr-qc},
       adsurl = {https://ui.adsabs.harvard.edu/abs/2016PhRvL.116t1101J}
}

@ARTICLE{Cao2017,
       author = {{Cao}, ChunJun and {Carroll}, Sean M. and {Michalakis}, Spyridon},
        title = "{Space from Hilbert space: Recovering geometry from bulk entanglement}",
      journal = {\prd},
         year = 2017,
        month = jan,
       volume = {95},
       number = {2},
          eid = {024031},
        pages = {024031},
          doi = {10.1103/PhysRevD.95.024031},
archivePrefix = {arXiv},
       eprint = {1606.08444},
 primaryClass = {hep-th},
       adsurl = {https://ui.adsabs.harvard.edu/abs/2017PhRvD..95b4031C}
}

@ARTICLE{CaoCarroll2018,
       author = {{Cao}, ChunJun and {Carroll}, Sean M.},
        title = "{Bulk entanglement gravity without a boundary: Towards finding Einstein's equation in Hilbert space}",
      journal = {\prd},
         year = 2018,
        month = apr,
       volume = {97},
       number = {8},
          eid = {086003},
        pages = {086003},
          doi = {10.1103/PhysRevD.97.086003},
archivePrefix = {arXiv},
       eprint = {1712.02803},
 primaryClass = {hep-th},
       adsurl = {https://ui.adsabs.harvard.edu/abs/2018PhRvD..97h6003C}
}

@ARTICLE{Zanardi2004,
       author = {{Zanardi}, Paolo and {Lidar}, Daniel A. and {Lloyd}, Seth},
        title = "{Quantum Tensor Product Structures are Observable Induced}",
      journal = {\prl},
         year = 2004,
        month = feb,
       volume = {92},
       number = {6},
          eid = {060402},
        pages = {060402},
          doi = {10.1103/PhysRevLett.92.060402},
archivePrefix = {arXiv},
       eprint = {quant-ph/0308043},
 primaryClass = {quant-ph},
       adsurl = {https://ui.adsabs.harvard.edu/abs/2004PhRvL..92f0402Z}
}

@ARTICLE{Piazza2010,
       author = {{Piazza}, Federico},
        title = "{Glimmers of a Pre-geometric Perspective}",
      journal = {Foundations of Physics},
         year = 2010,
        month = mar,
       volume = {40},
       number = {3},
        pages = {239-266},
          doi = {10.1007/s10701-009-9387-5},
archivePrefix = {arXiv},
       eprint = {hep-th/0506124},
 primaryClass = {hep-th},
       adsurl = {https://ui.adsabs.harvard.edu/abs/2010FoPh...40..239P}
}

@ARTICLE{Carroll_Singh2020,
       author = {{Carroll}, Sean M. and {Singh}, Ashmeet},
        title = "{Quantum mereology: Factorizing Hilbert space into subsystems with quasiclassical dynamics}",
      journal = {\pra},
         year = 2021,
        month = feb,
       volume = {103},
       number = {2},
          eid = {022213},
        pages = {022213},
          doi = {10.1103/PhysRevA.103.022213},
archivePrefix = {arXiv},
       eprint = {2005.12938},
 primaryClass = {quant-ph},
       adsurl = {https://ui.adsabs.harvard.edu/abs/2021PhRvA.103b2213C}
}

@ARTICLE{Singh2018,
       author = {{Singh}, Ashmeet and {Carroll}, Sean M.},
        title = "{Modeling Position and Momentum in Finite-Dimensional Hilbert Spaces via Generalized Pauli Operators}",
      journal = {arXiv e-prints},
         year = 2018,
        month = jun,
          eid = {arXiv:1806.10134},
        pages = {arXiv:1806.10134},
archivePrefix = {arXiv},
       eprint = {1806.10134},
 primaryClass = {quant-ph},
       adsurl = {https://ui.adsabs.harvard.edu/abs/2018arXiv180610134S}
}

@ARTICLE{Cao2019_essay,
       author = {{Cao}, Chunjun and {Chatwin-Davies}, Aidan and {Singh}, Ashmeet},
        title = "{How low can vacuum energy go when your fields are finite-dimensional?}",
      journal = {International Journal of Modern Physics D},
         year = 2019,
        month = jan,
       volume = {28},
       number = {14},
          eid = {1944006},
        pages = {1944006},
          doi = {10.1142/S0218271819440061},
archivePrefix = {arXiv},
       eprint = {1905.11199},
 primaryClass = {hep-th},
       adsurl = {https://ui.adsabs.harvard.edu/abs/2019IJMPD..2844006C}
}

@ARTICLE{Cotler2019,
       author = {{Cotler}, Jordan S. and {Penington}, Geoffrey R. and {Ranard}, Daniel H.},
        title = "{Locality from the Spectrum}",
      journal = {Communications in Mathematical Physics},
         year = 2019,
        month = jun,
       volume = {368},
       number = {3},
        pages = {1267-1296},
          doi = {10.1007/s00220-019-03376-w},
archivePrefix = {arXiv},
       eprint = {1702.06142},
 primaryClass = {quant-ph},
       adsurl = {https://ui.adsabs.harvard.edu/abs/2019CMaPh.368.1267C}
}

@article{Cao:2023gkw,
    author = "Cao, Chunjun and Chemissany, Wissam and Jahn, Alexander and Zimbor\'as, Zolt\'an",
    title = "{Approximate observables from non-isometric maps: de Sitter tensor networks with overlapping qubits}",
    eprint = "2304.02673",
    archivePrefix = "arXiv",
    primaryClass = "hep-th",
    month = "4",
    year = "2023"
}

@article{donnelly+giddings2016_1,
      author         = "Donnelly, William and Giddings, Steven B.",
      title          = "{Diffeomorphism-invariant observables and their nonlocal
                        algebra}",
      journal        = "Phys. Rev.",
      volume         = "D93",
      year           = "2016",
      number         = "2",
      pages          = "024030",
      doi            = "10.1103/PhysRevD.94.029903, 10.1103/PhysRevD.93.024030",
      note           = "[Erratum: Phys. Rev.D94,no.2,029903(2016)]",
      eprint         = "1507.07921",
      archivePrefix  = "arXiv",
      primaryClass   = "hep-th",
      reportNumber   = "NSF-KITP-15-133",
      SLACcitation   = "%%CITATION = ARXIV:1507.07921;%%"
}

@Article{NumPy,
 title         = {Array programming with {NumPy}},
 author        = {Charles R. Harris and K. Jarrod Millman and St{\'{e}}fan J.
                 van der Walt and Ralf Gommers and Pauli Virtanen and David
                 Cournapeau and Eric Wieser and Julian Taylor and Sebastian
                 Berg and Nathaniel J. Smith and Robert Kern and Matti Picus
                 and Stephan Hoyer and Marten H. van Kerkwijk and Matthew
                 Brett and Allan Haldane and Jaime Fern{\'{a}}ndez del
                 R{\'{i}}o and Mark Wiebe and Pearu Peterson and Pierre
                 G{\'{e}}rard-Marchant and Kevin Sheppard and Tyler Reddy and
                 Warren Weckesser and Hameer Abbasi and Christoph Gohlke and
                 Travis E. Oliphant},
 year          = {2020},
 month         = sep,
 journal       = {Nature},
 volume        = {585},
 number        = {7825},
 pages         = {357--362},
 doi           = {10.1038/s41586-020-2649-2},
 publisher     = {Springer Science and Business Media {LLC}},
 url           = {https://doi.org/10.1038/s41586-020-2649-2}
}

@Article{scipy,
 title         = {Array programming with {scipy}},
 author        = {Charles R. Harris and K. Jarrod Millman and St{\'{e}}fan J.
                 van der Walt and Ralf Gommers and Pauli Virtanen and David
                 Cournapeau and Eric Wieser and Julian Taylor and Sebastian
                 Berg and Nathaniel J. Smith and Robert Kern and Matti Picus
                 and Stephan Hoyer and Marten H. van Kerkwijk and Matthew
                 Brett and Allan Haldane and Jaime Fern{\'{a}}ndez del
                 R{\'{i}}o and Mark Wiebe and Pearu Peterson and Pierre
                 G{\'{e}}rard-Marchant and Kevin Sheppard and Tyler Reddy and
                 Warren Weckesser and Hameer Abbasi and Christoph Gohlke and
                 Travis E. Oliphant},
 year          = {2020},
 month         = sep,
 journal       = {Nature},
 volume        = {585},
 number        = {7825},
 pages         = {357--362},
 doi           = {10.1038/s41586-020-2649-2},
 publisher     = {Springer Science and Business Media {LLC}},
 url           = {https://doi.org/10.1038/s41586-020-2649-2}
}

@inproceedings{Okuta2017CuPyA,
  title={CuPy : A NumPy-Compatible Library for NVIDIA GPU Calculations},
  author={Ryosuke Okuta and Yuya Unno and Daisuke Nishino and Shohei Hido and Crissman},
  year={2017},
  url={https://api.semanticscholar.org/CorpusID:41278748}
}

@Article{Matplotlib,
  Author    = {Hunter, J. D.},
  Title     = {Matplotlib: A 2D graphics environment},
  Journal   = {Computing in Science \& Engineering},
  Volume    = {9},
  Number    = {3},
  Pages     = {90--95},
  publisher = {IEEE COMPUTER SOC},
  doi       = {10.1109/MCSE.2007.55},
  year      = 2007
}

@ARTICLE{Chao2017,
  author =	{Chao, Rui and Reichardt, Ben W. and Sutherland, Chris and Vidick, Thomas},
  title =	{{Overlapping Qubits}},
  booktitle =	{8th Innovations in Theoretical Computer Science Conference (ITCS 2017)},
  pages =	{48:1--48:21},
  series =	{Leibniz International Proceedings in Informatics (LIPIcs)},
  ISBN =	{978-3-95977-029-3},
  ISSN =	{1868-8969},
  year =	{2017},
  volume =	{67},
  editor =	{Papadimitriou, Christos H.},
  publisher =	{Schloss Dagstuhl -- Leibniz-Zentrum f{\"u}r Informatik},
  address =	{Dagstuhl, Germany},
  URL =		{https://drops.dagstuhl.de/entities/document/10.4230/LIPIcs.ITCS.2017.48},
  URN =		{urn:nbn:de:0030-drops-81826},
  doi =		{10.4230/LIPIcs.ITCS.2017.48}
}

@ARTICLE{Friedrich2022,
       author = {{Friedrich}, Oliver and {Singh}, Ashmeet and {Dor{\'e}}, Olivier},
        title = "{Toolkit for scalar fields in universes with finite-dimensional Hilbert space}",
      journal = {Classical and Quantum Gravity},
         year = 2022,
        month = dec,
       volume = {39},
       number = {23},
          eid = {235012},
        pages = {235012},
          doi = {10.1088/1361-6382/ac95f0},
archivePrefix = {arXiv},
       eprint = {2201.08405},
 primaryClass = {gr-qc},
       adsurl = {https://ui.adsabs.harvard.edu/abs/2022CQGra..39w5012F}
}

@ARTICLE{Loizeau2023,
       author = {{Loizeau}, Nicolas and {Morone}, Flaviano and {Sels}, Dries},
        title = "{Unveiling order from chaos by approximate 2-localization of random matrices}",
      journal = {Proceedings of the National Academy of Science},
         year = 2023,
        month = sep,
       volume = {120},
       number = {39},
          eid = {e2308006120},
        pages = {e2308006120},
          doi = {10.1073/pnas.2308006120},
archivePrefix = {arXiv},
       eprint = {2303.02782},
 primaryClass = {quant-ph},
       adsurl = {https://ui.adsabs.harvard.edu/abs/2023PNAS..12008006L}
}

@ARTICLE{AlbrechtIglesias2008,
       author = {{Albrecht}, Andreas and {Iglesias}, Alberto},
        title = "{Clock ambiguity and the emergence of physical laws}",
      journal = {\prd},
         year = 2008,
        month = mar,
       volume = {77},
       number = {6},
          eid = {063506},
        pages = {063506},
          doi = {10.1103/PhysRevD.77.063506},
archivePrefix = {arXiv},
       eprint = {0708.2743},
 primaryClass = {hep-th},
       adsurl = {https://ui.adsabs.harvard.edu/abs/2008PhRvD..77f3506A}
}

@ARTICLE{AlbrechtIglesias2015,
       author = {{Albrecht}, Andreas and {Iglesias}, Alberto},
        title = "{Lorentz symmetric dispersion relation from a random Hamiltonian}",
      journal = {\prd},
         year = 2015,
        month = feb,
       volume = {91},
       number = {4},
          eid = {043529},
        pages = {043529},
          doi = {10.1103/PhysRevD.91.043529},
archivePrefix = {arXiv},
       eprint = {1003.2566},
 primaryClass = {hep-th},
       adsurl = {https://ui.adsabs.harvard.edu/abs/2015PhRvD..91d3529A}
}

@ARTICLE{Adil2024,
       author = {{Adil}, Arsalan and {Rudolph}, Manuel S. and {Arrasmith}, Andrew and {Holmes}, Zo{\"e} and {Albrecht}, Andreas and {Sornborger}, Andrew},
        title = "{A Search for Classical Subsystems in Quantum Worlds}",
      journal = {arXiv e-prints},
         year = 2024,
        month = mar,
          eid = {arXiv:2403.10895},
        pages = {arXiv:2403.10895},
          doi = {10.48550/arXiv.2403.10895},
archivePrefix = {arXiv},
       eprint = {2403.10895},
 primaryClass = {quant-ph},
       adsurl = {https://ui.adsabs.harvard.edu/abs/2024arXiv240310895A}
}

@ARTICLE{Zanardi2024,
       author = {{Zanardi}, Paolo and {Dallas}, Emanuel and {Andreadakis}, Faidon and {Lloyd}, Seth},
        title = "{Operational Quantum Mereology and Minimal Scrambling}",
      journal = {Quantum},
         year = 2024,
        month = jul,
       volume = {8},
        pages = {1406},
          doi = {10.22331/q-2024-07-11-1406},
archivePrefix = {arXiv},
       eprint = {2212.14340},
 primaryClass = {quant-ph},
       adsurl = {https://ui.adsabs.harvard.edu/abs/2024Quant...8.1406Z}
}

@ARTICLE{Friedrich2024,
       author = {{Friedrich}, Oliver and {Cao}, ChunJun and {Carroll}, Sean M. and {Cheng}, Gong and {Singh}, Ashmeet},
        title = "{Holographic phenomenology via overlapping degrees of freedom}",
      journal = {Classical and Quantum Gravity},
         year = 2024,
        month = oct,
       volume = {41},
       number = {19},
          eid = {195003},
        pages = {195003},
          doi = {10.1088/1361-6382/ad6e4d},
archivePrefix = {arXiv},
       eprint = {2402.11016},
 primaryClass = {hep-th},
       adsurl = {https://ui.adsabs.harvard.edu/abs/2024CQGra..41s5003F}
}

@Article{Loizeau2024,
author={Loizeau, Nicolas
and Sels, Dries},
title={Quantum Mereology and Subsystems from the Spectrum},
journal={Foundations of Physics},
year={2024},
month={Dec},
day={21},
volume={55},
number={1},
pages={3},
issn={1572-9516},
doi={10.1007/s10701-024-00813-2},
url={https://doi.org/10.1007/s10701-024-00813-2}
}

@ARTICLE{Kabernik2020,
       author = {{Kabernik}, Oleg and {Pollack}, Jason and {Singh}, Ashmeet},
        title = "{Quantum state reduction: Generalized bipartitions from algebras of observables}",
      journal = {\pra},
         year = 2020,
        month = mar,
       volume = {101},
       number = {3},
          eid = {032303},
        pages = {032303},
          doi = {10.1103/PhysRevA.101.032303},
archivePrefix = {arXiv},
       eprint = {1909.12851},
 primaryClass = {quant-ph},
       adsurl = {https://ui.adsabs.harvard.edu/abs/2020PhRvA.101c2303K}
}

@ARTICLE{Pollack_Singh2019,
       author = {{Pollack}, Jason and {Singh}, Ashmeet},
        title = "{Towards space from Hilbert space: finding lattice structure in finite-dimensional quantum systems}",
      journal = {Quantum Studies: Mathematics and Foundations},
         year = 2019,
        month = jun,
       volume = {6},
       number = {2},
        pages = {181-200},
          doi = {10.1007/s40509-018-0176-8},
archivePrefix = {arXiv},
       eprint = {1801.10168},
 primaryClass = {quant-ph},
       adsurl = {https://ui.adsabs.harvard.edu/abs/2019QSMF....6..181P}
}

@misc{soulas2025a,
      title={On the emergence of preferred structures in quantum theory}, 
      author={Antoine Soulas and Guilherme Franzmann and Andrea Di Biagio},
      year={2025},
      eprint={2512.07468},
      archivePrefix={arXiv},
      primaryClass={quant-ph},
      url={https://arxiv.org/abs/2512.07468}, 
}

@misc{vanrietvelde2025,
      title={Partitions in quantum theory}, 
      author={Augustin Vanrietvelde and Octave Mestoudjian and Pablo Arrighi},
      year={2025},
      eprint={2506.22218},
      archivePrefix={arXiv},
      primaryClass={quant-ph},
      url={https://arxiv.org/abs/2506.22218}, 
}

@misc{zanardi2025,
      title={Mereological Quantum Phase Transitions}, 
      author={Paolo Zanardi and Emanuel Dallas and Faidon Andreadakis},
      year={2025},
      eprint={2510.06389},
      archivePrefix={arXiv},
      primaryClass={quant-ph},
      url={https://arxiv.org/abs/2510.06389}, 
}

@article{loizeau2025,
  title = {Opening Krylov Space to Access All-Time Dynamics via Dynamical Symmetries},
  author = {Loizeau, Nicolas and Bu\ifmmode \check{c}\else \v{c}\fi{}a, Berislav and Sels, Dries},
  journal = {Phys. Rev. Lett.},
  volume = {135},
  issue = {20},
  pages = {200401},
  numpages = {9},
  year = {2025},
  month = {Nov},
  publisher = {American Physical Society},
  doi = {10.1103/qpyp-1mfj},
  url = {https://link.aps.org/doi/10.1103/qpyp-1mfj}
}

@article{buca2023,
  title = {Unified Theory of Local Quantum Many-Body Dynamics: Eigenoperator Thermalization Theorems},
  author = {Bu\ifmmode \check{c}\else \v{c}\fi{}a, Berislav},
  journal = {Phys. Rev. X},
  volume = {13},
  issue = {3},
  pages = {031013},
  numpages = {30},
  year = {2023},
  month = {Aug},
  publisher = {American Physical Society},
  doi = {10.1103/PhysRevX.13.031013},
  url = {https://link.aps.org/doi/10.1103/PhysRevX.13.031013}
}

@article{Mori2018,
doi = {10.1088/1361-6455/aabcdf},
url = {https://doi.org/10.1088/1361-6455/aabcdf},
year = {2018},
month = {may},
publisher = {IOP Publishing},
volume = {51},
number = {11},
pages = {112001},
author = {Mori, Takashi and Ikeda, Tatsuhiko N and Kaminishi, Eriko and Ueda, Masahito},
title = {Thermalization and prethermalization in isolated quantum systems: a theoretical overview},
journal = {Journal of Physics B: Atomic, Molecular and Optical Physics},
}

@article{Polkovnikov2011,
  title = {Colloquium: Nonequilibrium dynamics of closed interacting quantum systems},
  author = {Polkovnikov, Anatoli and Sengupta, Krishnendu and Silva, Alessandro and Vengalattore, Mukund},
  journal = {Rev. Mod. Phys.},
  volume = {83},
  issue = {3},
  pages = {863--883},
  numpages = {0},
  year = {2011},
  month = {Aug},
  publisher = {American Physical Society},
  doi = {10.1103/RevModPhys.83.863},
  url = {https://link.aps.org/doi/10.1103/RevModPhys.83.863}
}

@Article{Rigol2008,
author={Rigol, Marcos
and Dunjko, Vanja
and Olshanii, Maxim},
title={Thermalization and its mechanism for generic isolated quantum systems},
journal={Nature},
year={2008},
month={Apr},
day={01},
volume={452},
number={7189},
pages={854-858},
issn={1476-4687},
doi={10.1038/nature06838},
url={https://doi.org/10.1038/nature06838}
}

@article{Sala2020,
  title = {Ergodicity Breaking Arising from Hilbert Space Fragmentation in Dipole-Conserving Hamiltonians},
  author = {Sala, Pablo and Rakovszky, Tibor and Verresen, Ruben and Knap, Michael and Pollmann, Frank},
  journal = {Phys. Rev. X},
  volume = {10},
  issue = {1},
  pages = {011047},
  numpages = {19},
  year = {2020},
  month = {Feb},
  publisher = {American Physical Society},
  doi = {10.1103/PhysRevX.10.011047},
  url = {https://link.aps.org/doi/10.1103/PhysRevX.10.011047}
}

@article{Wigner2023,
 ISSN = {0003486X},
 URL = {http://www.jstor.org/stable/1970079},
 author = {Eugene P. Wigner},
 journal = {Annals of Mathematics},
 number = {3},
 pages = {548--564},
 publisher = {Annals of Mathematics},
 title = {Characteristic Vectors of Bordered Matrices With Infinite Dimensions},
 urldate = {2023-03-02},
 volume = {62},
 year = {1955}
}

@article{Guhr1998,
   author = {Thomas Guhr and Axel Müller–Groeling and Hans A. Weidenmüller},
   doi = {10.1016/S0370-1573(97)00088-4},
   issn = {03701573},
   issue = {4-6},
   journal = {Physics Reports},
   month = {6},
   pages = {189-425},
   title = {Random-matrix theories in quantum physics: common concepts},
   volume = {299},
   year = {1998},
}

@article{Atas2013,
   author = {Y Y Atas and E Bogomolny and O Giraud and G Roux},
   doi = {10.1103/PhysRevLett.110.084101},
   issue = {8},
   journal = {Phys. Rev. Lett.},
   month = {2},
   pages = {84101},
   publisher = {American Physical Society},
   title = {Distribution of the Ratio of Consecutive Level Spacings in Random Matrix Ensembles},
   volume = {110},
   url = {https://link.aps.org/doi/10.1103/PhysRevLett.110.084101},
   year = {2013},
}

@article{Lashkari2015,
    author = "Lashkari, Nima and Van Raamsdonk, Mark",
    title = "{Canonical Energy is Quantum Fisher Information}",
    eprint = "1508.00897",
    archivePrefix = "arXiv",
    primaryClass = "hep-th",
    doi = "10.1007/JHEP04(2016)153",
    journal = "JHEP",
    volume = "04",
    pages = "153",
    year = "2016"
}

\appendix
\addtocontents{toc}{\fixappendix}

\section{GUE conventions \& Gaussian Pauli-string coefficients}
\label{app:GUE}

Let a Hamiltonian $\hat H$ be drawn from the Gaussian Unitary Ensemble (i.e.\ let it be a \enquote{Wigner matrix}). This means that the entries in the upper triangle of $\hat H$ are all independent Gaussian numbers with
\begin{equation}
    \langle H_{ij} \rangle = 0\ \ \ \ \&\ \ \ \langle |H_{ij}|^2 \rangle = \frac{1}{d}\ ,
\end{equation}
where $d = \mathrm{dim} \mathcal{H}$ is the dimension of the Hilbert space. In the limit of $d\rightarrow \infty$ the empirical eigenvalue distribution of such matrices tends towards the Wigner semicircle distribution,
\begin{equation}
    p_{\hat H}(\lambda) \approx \frac{1}{2\pi}\sqrt{4 - \lambda^2}\ .
\end{equation}

When we view the space of Hermitian matrices as a real Hilbert space endowed with the Hilbert-Schmidt product $\langle A, B \rangle = \mathrm{Tr}(AB)$, then the GUE is actually just the isotropic Gaussian distribution in that space with variance $\sigma^2 = 1/d$ in each dimension. To see this, consider the basis matrices $\mathcal{B} = \lbrace\hat D_{a}\,,\,\hat R_{ab}\,,\,\hat I_{ab}\rbrace$ given by
\begin{align}
    (D_{a})_{ij} &= \delta_{ia} \delta_{ja} \nonumber \\
    (R_{ab})_{ij} &= \frac{1}{\sqrt{2}}\left(\delta_{ia} \delta_{jb} + \delta_{ja} \delta_{ib}\right) \nonumber \\
    (I_{ab})_{ij} &= \frac{1}{\sqrt{2}}\left(\mathrm{i}\delta_{ia} \delta_{jb} - \mathrm{i}\delta_{ja} \delta_{ib}\right)\ .
\end{align}
These matrices are orthonormal \wrt the Hilbert-Schmidt product and they form a complete basis of the Hermitian matrices, \ie any Hamiltonian $\hat H$ can be decomposed as
\begin{equation}
    \hat H = \sum_{a} h^{D,a} \hat D_{a}\ +\ \sum_{a<b} h^{R,ab} \hat R_{ab}\ +\ \sum_{a<b} h^{I,ab} \hat I_{ab}\ .
\end{equation}
Now choosing all the coefficients $\lbrace h^{D,a} \,,\, h^{R,ab}\,,\, h^{I,ab}\rbrace$ from independent Gaussian distributions with the same variance $\sigma^2 = 1/d$ will exactly lead to $\langle |H_{ij}|^2 \rangle = \frac{1}{d}$, as we required for the GUE.

But note that because the GUE is just an isotropic Gaussian in the space of all Hermitian $\hat H$ (where \enquote{isotropic} means that the covariance matrix of the Gaussian is proportional to the identity) we may rotate to any other orthonormal basis of that space and draw independent components along the axes of this new basis, and we will still obtain draws from that same original distribution. For that reason, the GUE is equivalent to drawing random Gaussian coefficients in a decomposition of $\hat H$ in terms of Pauli strings - as we had claimed in \secref{app_to_GUE}.

\section{Hamilton operator at candidate dofs}
\label{app:structure_of_Hint}

In this appendix we demonstrate that at the interaction minima, the (traceless part of the) Hamiltonian of our system can always be written as
\begin{equation}
    \hat H\, =\, \frac{E_q}{2}\,\sigma_z \otimes \mathbb{I}_r\, +\, \mathbb{I}_q \otimes \hat H_{r}\, +\, \sigma_z \otimes \hat X_{\mathrm{int}}\ ,
\end{equation}
where the interaction part of the Hamiltonian, $\sigma_z \otimes \hat X_{\mathrm{int}}$, commutes with the self-Hamiltonians. 

We start by considering any factorization $\mathcal{H} \simeq \mathcal{H}_q \otimes \mathcal{H}_r$. In any such factorization the Hamiltonian can always be written as
\begin{align}
    \hat H\, &=\, \frac{E_q}{2}\,\sigma_z \otimes \mathbb{I}_r\, +\, \mathbb{I}_q \otimes \hat H_{r}\, +\, \hat H_{\mathrm{int}}\\
    &\equiv\, \hat H_0 \, +\, \hat H_{\mathrm{int}}\ ,
\end{align}
where $\mathrm{Tr}_q \hat H_{\mathrm{int}} = 0$ and $\mathrm{Tr}_r \hat H_{\mathrm{int}} = 0\,$, and where in the 2nd line we have defined $\hat H_0$ to be the non-interacting part of the Hamiltonian. Now let $\lbrace \ket{i}\rbrace$ be an eigenbasis of $\hat H_0$ and let us assume that $\hat H_{\mathrm{int}}$ is \emph{not} diagonal in that basis. We can then consider an infinitesimal Schrieffer–Wolff-like transformation
\begin{equation}
    \hat H \rightarrow \hat H_\alpha = e^{\alpha \hat S} \hat H e^{-\alpha \hat S}
\end{equation}
where the elements of $\hat S$ in the basis $\lbrace \ket{i}\rbrace$ are given by
\begin{equation}
    S_{ij} = \left\lbrace \begin{matrix}
    \frac{H_{\mathrm{int},ij}}{d_i - d_j}\ \ &\mathrm{for}\ d_i\neq d_j \cr 0\ \ &\mathrm{else} 
    \end{matrix}\right.\ .
\end{equation}
Note that for $d_i \neq d_j$
\begin{align}
    [\hat S , \hat H_0]_{ij} &= \frac{H_{\mathrm{int},ij}\, d_j}{d_i - d_j} - \frac{H_{\mathrm{int},ij}\, d_i}{d_i - d_j}\nonumber \\
    &= -H_{\mathrm{int},ij} \\
    \Rightarrow [\hat S , \hat H_0] &= - \hat V_{\mathrm{int}}\ ,
\end{align}
where the matrix elements of $\hat V_{\mathrm{int}}$ are the same as those of $\hat H_{\mathrm{int}}$ except when $d_i = d_j$, where we set them to $0\,$.

The transformation $\hat H \rightarrow \hat H_\alpha'$ can be interpreted as a change of Hilbert space factorization (+ internal transformations of the factors), and we want to show that -- under the assumption that $\hat S \neq 0$, i.e.\ that $\hat H_{\mathrm{int}}$ is not diagonal in the eigenbasis of $\hat H_0$ -- the derivative of $\mathrm{Tr}(\hat H_{\mathrm{int}}^2)$ is always negative. If this were true, then any factorization where $\hat S \neq 0$ cannot be a local minimum of our loss function.

To execute this plan, let us first note that
\begin{equation}
    \frac{\dd }{\dd \alpha} \mathrm{Tr}(\hat H_{\mathrm{int},\alpha}^2) = 2 \mathrm{Tr}\left(\hat H_{\mathrm{int}} \frac{\dd }{\dd \alpha}\hat H_{\mathrm{int},\alpha}\right)\ .
\end{equation}
Note also that $\hat H_{\mathrm{int}}$ can be computed from $\hat H$ as
\begin{equation}
    \hat H_{\mathrm{int}} = \hat H - \frac{1}{d/2}\mathrm{Tr}_r(\hat H) \otimes \mathbb{I}_r\, -\, \mathbb{I}_q \otimes \frac{1}{2}\mathrm{Tr}_q(\hat H)\ .
\end{equation}
Consequently, we can express the derivative of $\hat H_{\mathrm{int}}$ \wrt $\alpha$ as
\begin{align}
    \frac{\dd \hat H_{\mathrm{int},\alpha}}{\dd \alpha} &= \frac{\dd \hat H_\alpha}{\dd \alpha} - \frac{1}{d/2}\mathrm{Tr}_r\left(\frac{\dd \hat H_\alpha}{\dd \alpha}\right) \otimes \mathbb{I}_r\nonumber \\
    &\ \ -\, \mathbb{I}_q \otimes \frac{1}{2}\mathrm{Tr}_q\left(\frac{\dd \hat H_\alpha}{\dd \alpha}\right)\ .
\end{align}
Since all partial traces of $\hat H_{\mathrm{int}}$ vanish, this means that
\begin{equation}
    \mathrm{Tr}\left(\hat H_{\mathrm{int}} \frac{\dd }{\dd \alpha}\hat H_{\mathrm{int},\alpha}\right) = \mathrm{Tr}\left(\hat H_{\mathrm{int}} \frac{\dd }{\dd \alpha}\hat H_{\alpha}\right)\ .
\end{equation}
Thus, the derivative of our loss function \wrt $\alpha$ is given by
\begin{align}
    &\left. \frac{\dd }{\dd \alpha}\mathrm{Tr}\left(\hat H_{\mathrm{int},\alpha}^2\right)\right|_{\alpha = 0}\nonumber \\
    &= 2 \mathrm{Tr}\left( \hat H_{\mathrm{int}} \left. \frac{\dd }{\dd \alpha} \hat H_{\alpha}\right|_{\alpha=0}\right)\nonumber \\
    &= 2 \mathrm{Tr}\left( \hat H_{\mathrm{int}}\ [\hat S, \hat H]\right)\nonumber \\
    &= 2 \mathrm{Tr}\left( \hat H_{\mathrm{int}}\ [\hat S, \hat H_0]\right) + 2 \mathrm{Tr}\left( \hat H_{\mathrm{int}}\ [\hat S, \hat H_{\mathrm{int}}]\right)\nonumber \\
    &= -2 \mathrm{Tr}\left( \hat H_{\mathrm{int}}\, \hat V_{\mathrm{int}}\right)\nonumber \\
    &\ \ \ + 2 \mathrm{Tr}\left( \hat H_{\mathrm{int}}\hat S \hat H_{\mathrm{int}}\right) - 2 \mathrm{Tr}\left( \hat H_{\mathrm{int}}\hat H_{\mathrm{int}} \hat S \right)\nonumber \\
    &= -2 \mathrm{Tr}\left( \hat H_{\mathrm{int}}\, \hat V_{\mathrm{int}}\right)\nonumber \\
    &= -2 \mathrm{Tr}\left(\hat V_{\mathrm{int}}^2\right)\ <\ 0\ ,
\end{align}
where in the last step we have used the fact that $\hat V_{\mathrm{int}}$ has no diagonal elements in the basis $\lbrace \ket{i}\rbrace\,$. So we have proven: if $\hat V_{\mathrm{int}}$ is non-zero, i.e.\ if $\hat H_{\mathrm{int}}$ is not diagonal in the eigenbasis of $H_0$, then an infinitesimal unitary transformation can change the Hilbert space factorization to a factorization with lower interaction strength. Consequently, at local minima of interaction strength, $\hat H_{\mathrm{int}}$ must be diagonal in the eigenbasis of $H_0$. Since by convention we wrote the eigen-Hamiltonian of the qubit factor as $\propto \sigma_z\otimes \mathbb{I}_r$, this means that at those minima $\hat H_{\mathrm{int}}$ is of the form
\begin{equation}
    \hat H_{\mathrm{int}} = \sigma_z \otimes \hat X_{\mathrm{int}}\ .
\end{equation}

\section{Discrete Pauli field}
\label{app:Dirac}

\begin{figure*}
\centering
  \includegraphics[width=0.8\textwidth]{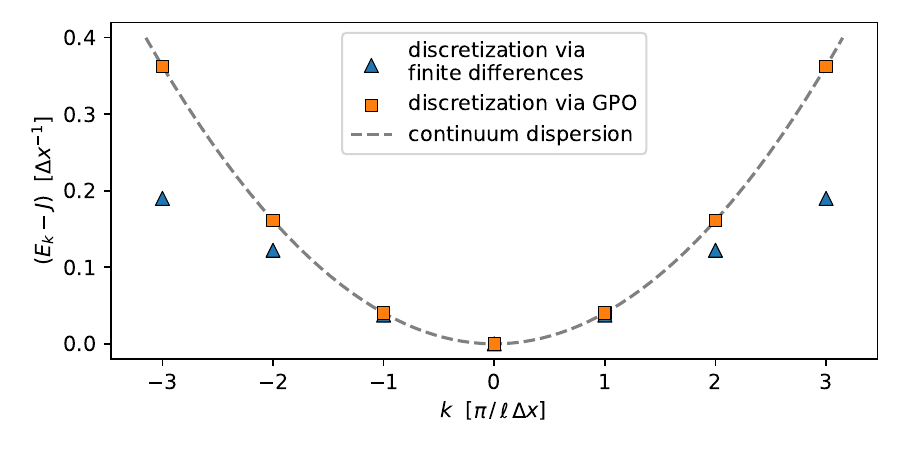}
   \caption{Dispersion relation of the discretized Pauli field when using different schemes to represent the Laplacian $\Delta$ on the grid.}
   \label{fi:dispersions}
\end{figure*}

In the non-relativistic limit, the Dirac field in 1+1 dimensions decouples into 4 copies of a field with the Hamiltonian
\begin{equation}
    \hat H = \int \dd x\ \hat b^\dagger(x) \left(m - \frac{1}{2m}\frac{\partial^2}{\partial x^2} \right) \hat b(x)
\end{equation}
where $\hat b^\dagger(x)$ creates a particle at location $x$ and satisfies
\begin{equation}
    \lbrace b^\dagger(x)\, ,\, b(y)\rbrace = \delta(x-y)\ .
\end{equation}
We will refer to this field as \enquote{Pauli field}. Discretizing the Pauli field on a lattice with grid spacing $\Delta x$ and using a finite-differences scheme to approximate $\partial^2/\partial x^2$ would yield
\begin{align}
    &\hat H = \nonumber \\
    &\sum_i \Delta x\ \hat b^\dagger(x_i) \left(m\, \hat b(x_i) - \frac{\hat b(x_{i-1}) + \hat b(x_{i+1}) - 2\hat b(x_i)}{2m\Delta x^2} \right) \ .
\end{align}
And defining
\begin{equation}
    \hat B_i \equiv \sqrt{\Delta x}\, \hat b(x_i)\ \Rightarrow\ \lbrace \hat B_i^\dagger\, ,\, \hat B_j\rbrace = \delta_{ij}
\end{equation}
this becomes
\begin{align}
    &\hat H = \sum_i  \hat B_i^\dagger \left(m \hat B_i - \frac{\hat B_{i+1} + \hat B_{i-1} - 2\hat B_{i}}{2m\Delta x^2} \right)\nonumber \\
    &= \sum_i \left[ \left(m + \frac{1}{m\, \Delta x^2}\right)\hat B_i^\dagger \hat B_i - \frac{\hat B_i^\dagger\hat B_{i+1} + \hat B_{i+1}^\dagger\hat B_{i}}{2m\Delta x^2} \right]\ .
\end{align}
Note that on small grids (like the ones we have used in our numerical results) the finite-differences-discretization employed above yields dispersion relations of the free field's Fourier modes that strongly deviate from the continuum limit - \cf the blue triangles and grey, dashed line in \figref{dispersions}. To ease this problem we choose to represent the derivative $\hat P^2 = -\partial^2/\partial x^2$ with the help of Generalized Pauli Operators (GPOs) instead of finite-differences. GPO can be seen as the most faithful (but still approximate) implementation of the Heisenberg commutator in finite dimensions  \cite{Singh2018, Friedrich2022}. The resulting conjugate operator pair $\hat P, \hat X$ induce shifts in each others spectra with periodic boundary conditions and their spectra consist of linearly spaced eigenvalues which align exactly with the position space and Fourier space grid of the discretization. As a consequence, the eigenvalues of $\hat P^2$ align much closer with the continuum limit, as we demonstrate with the orange squares in \figref{dispersions}. The (field-)Hamiltonian in this alternative discretization scheme becomes
\begin{align}
    \Delta x\hat H_{\mathrm{no-tr.}} &= \sum_{i,j} M_{ij}\, \left( \hat{B}_i^\dagger \hat{B}_j - \frac{1}{2}\delta_{ij}\right) \ ,
\end{align}
where the coefficients $M_{ij}$ are given by the matrix elements of $\hat P^2/2J$ in the GPO-scheme (see \eg \cite{Singh2018} for details), and where we have explicitly removed the trace of $\hat H\,$.

\section{Information minima \& snap-shot mereology}
\label{app:Wishart}

\begin{figure*}
\centering
  \includegraphics[width=0.9\textwidth]{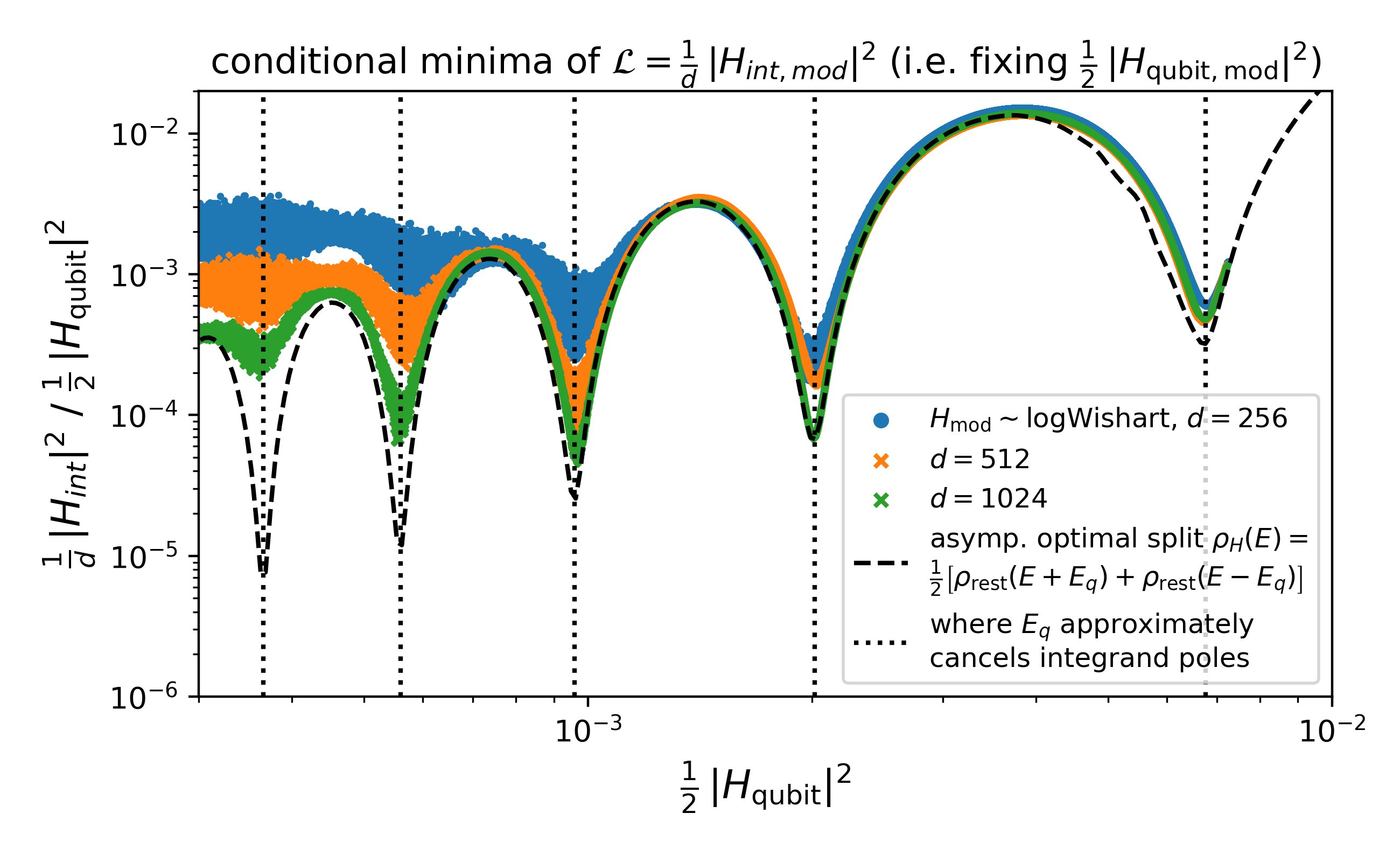}
   \caption{Same as \figref{minimum_distribution} but for Hamiltonians drawn from the log-Wishart ensemble. These can be interpreted as modular Hamiltonians corresponding to standard random density matrices.}
   \label{fi:conditional_minima_logWishart}
\end{figure*}

In this appendix we transfer the results of \secref{method} to a different situation that also highlights a different point-of-view on quantum mereology itself. We consider matrices of the form
\begin{equation}
    \hat H_{\mathrm{mod}} = -\ln\hat W\ ,
\end{equation}
where $\hat W$ is drawn from a complex Wishart distribution with $n$ degrees-of-freedom (not to be confused with the emergent \enquote{degrees-of-freedom} we are investigating in the main text). In the limit $d\rightarrow \infty$ and $\frac{d}{n}\rightarrow \gamma = \mathrm{const.}$ such $\hat H_{\mathrm{mod}}$ are distributed approximately (and up to a constant energy offset) like modular Hamiltonians of random density matrices that are obtained by considering Haar-random pure states in a Hilbert space of dimension $d\times n$ and then tracing out a random $n$-dimensional factor from that Hilbert space. 

When factoring out a qubit from the remaining, $d$-dimensional Hilbert space, $\hat H_{\mathrm{mod}}$ can be again decomposed into self- and interaction-Hamiltonians. But our loss function $\mathrm{Tr}\left(\hat H_{\mathrm{int}}^2\right)$ now acquires an information-theoretic meaning: it can be shown that the mutual information between $\mathcal{H}_q$ and $\mathcal{H}_r$ is at leading order in the terms of $\hat H_{\mathrm{mod}}$ given by \cite{Lashkari2015}
\begin{align}
\label{eq:Hint_vs_I}
    \mathcal{I}(q:r\,|\,\hat W) \approx \frac{1}{2d}\mathrm{Tr}\left(\hat H_{\mathrm{int}}^2\right)\ .
\end{align}
This leading order expression will be a good approximation for $d \ll n$, or equivalently, for $\hat W \approx \mathbb{I} - \hat H_{\mathrm{mod}}\,$. So in that limit we are identifying qubits that have minimum mutual information with their complement.

In \figref{conditional_minima_logWishart} we show results similar to those of \figref{minimum_distribution} but in this modified situation. The different symbols show interaction minima ($\sim$ mutual information minima) as a function of the norm of the qubits modular self-Hamiltonian. Again, a discrete series of peaks with particularly low loss function appear. To predict the asymptotic location of these minima we simply need to replace the asymptotic eigenvalue distribution of the (modular) Hamiltonian in our calculations of \secref{GUE_species}. There we had considered GUE Hamiltonians whose eigenvalue density asymptotes to the semi-circle law. Now, our modular Hamiltonians are drawn from a log-Wishart distribution which follows a log-Marchenko-Pastur law, i.e.
\begin{align}
    &\rho_{\mathrm{total}}(\lambda) = \nonumber \\
    &\frac{\sqrt{\left(e^{-\lambda} - \left[1-\sqrt{\gamma}\right]^2\right)\left(\left[1+\sqrt{\gamma}\right]^2 - e^{-\lambda}\right)}}{2\pi\gamma}\ .
\end{align}
Replacing the semi-circle law with the log-Marchenko-Pastur law yields the dashed line in \figref{conditional_minima_logWishart} which indeed seems to represent a lower bound for our  empirically found minima. And in the limit of large dimension $d$, the minima seem to approach this bound. Note that this time the characteristic function $\phi_{\mathrm{total}}$ corresponding to $\rho_{\mathrm{total}}$ is complex and has no exact zeros that can cancel integrand poles as in \secref{method}. But minima of the absolute value $|\phi_{\mathrm{total}}|$ can still dampen those poles. And the location of these minima still correctly predict the location of low-interaction peaks - as we demonstrate with the vertical dotted lines in \figref{conditional_minima_logWishart}.

Interpreting the degrees-of-freedom identified by our methodology as minimum-information factors as opposed to minimum interaction factors may be viewed as a \enquote{snap-shot} notion of mereology where preferred Hilbert space factors are not selected by dynamical considerations but by a static, information-theoretical criterion applied to a given density matrix. Such a point-of-view may be required in the light of work such as \cite{Cao2017, CaoCarroll2018}, where the emergence of classical spacetime is dependent on the structure of quantum information between Hilbert space factors. It can also loosely be motivated by analogy with large-scale structure cosmology: cosmologists do not observe the dynamics of the universe but only one snap-shot along our backward light-cone. The information extracted from that light-cone (which is essentially a density matrix in the Hilbert space of the inflaton field) is then arranged into cosmologically useful degrees-of-freedom and thus into a coherent story of inflation and structure formation. In that sense, cosmology itself is an implementation of (snapshot-)mereology. We will investigate such a point-of-view on mereology further in follow-up work.

\end{document}